\DeclareMathOperator*{\argmin}{arg\,min}
\newtheorem{theorem}{Theorem}[section]
\newtheorem{definition}{Definition}[section]
\newtheorem{lemma}{Lemma}[section]
\newtheorem{proposition}{Proposition}[section]
\begin{document}

\title{\title{Disentangled Control of Multi-Agent Systems}}
\author{
Ruoyu Lin$^{1}$,
Gennaro Notomista$^{2}$,
and Magnus Egerstedt$^{3}$
\thanks{This work was supported in part by the U.S. Army Research Lab through ARL DCIST CRA W911NF-17-2-0181, and in part by the NSERC Discovery under Grant RGPIN-2023-03703.}
\thanks{$^{1}$Ruoyu Lin is with the Department of Electrical Engineering and Computer Science, University of California, Irvine, Irvine, CA 92697, USA. Email: {\tt\small \href{rlin10@uci.edu}{\tt\small rlin10@uci.edu}}}
\thanks{$^{2}$Gennaro Notomista is with the Department of Electrical and Computer Engineering, University of Waterloo, Waterloo, ON, Canada. Email: {\tt\small \href{mailto:gennaro.notomista@uwaterloo.ca}{\tt\small gennaro.notomista@uwaterloo.ca}}}
\thanks{$^{3}$Magnus Egerstedt is with the University of North Carolina at Chapel Hill, Chapel Hill, NC, 27599, USA. Email: {\tt\small \href{magnus@unc.edu}{\tt\small magnus@unc.edu}}}
}
\maketitle
\begin{abstract}
This paper develops a general framework with convergence guarantees for multi-agent control synthesis, which applies to a wide range of problems, including those with time-varying objective functions. The proposed framework achieves decentralization without inducing entangled dynamics among agents, and it naturally supports multi-objective robotics and real-time implementation. To demonstrate its generality and effectiveness, the framework is applied to three representative problems, namely time-varying leader-follower formation control, decentralized coverage control for time-varying density functions, which is a long-standing open problem, and safe formation navigation in a dense environment.
\end{abstract}
\begin{IEEEkeywords}
Disentangled control, networks of autonomous agents, cooperative control, robotics.
\end{IEEEkeywords}

\section{Introduction} \label{Section:Introduction}
Multi-agent systems are increasingly used to tackle problems across energy networks, transportation systems, robotics, reinforcement learning, and generative modeling (see, e.g., \cite{bullo2009distributed,mesbahi2010graph,olfati2007consensus,bidram2013secondary,dresner2008multiagent,tan1993multi,ghosh2018multi}). Their strength lies in shaping interactions and distributing decision making, policy synthesis, or action generation among autonomous agents, enabling resilience to individual failures, effective collaboration, and adaptivity to changing environments. At the heart of these capabilities is decentralization, which also helps ensure scalability by avoiding a prohibitive increase in per-agent computational load as the number of agents grows.

While multi-agent systems have been adopted in a wide variety of applications, many theoretical developments have been informed by ideas from multi-agent control, through canonical problems such as formation control \cite{egerstedt2001formation}, coverage control \cite{cortes2004coverage}, caging control \cite{pereira2004decentralized}, containment control \cite{ji2008containment}, and consensus control (e.g., rendezvous, flocking, and cyclic pursuit) \cite{cortes2017coordinated}. Although these problems differ in their specific objectives, they share common underlying structures such as local interaction rules and coherent system-level behaviors. This observation leads to the question: Can one develop a general framework that captures these diverse tasks within a unified formulation? Such a framework may not only advance a new understanding of decentralized multi-agent control, but also inspire extensions to other fields of research.
\begin{figure}[t]
\centering
\vspace{0.2cm}
\includegraphics[scale=0.21]{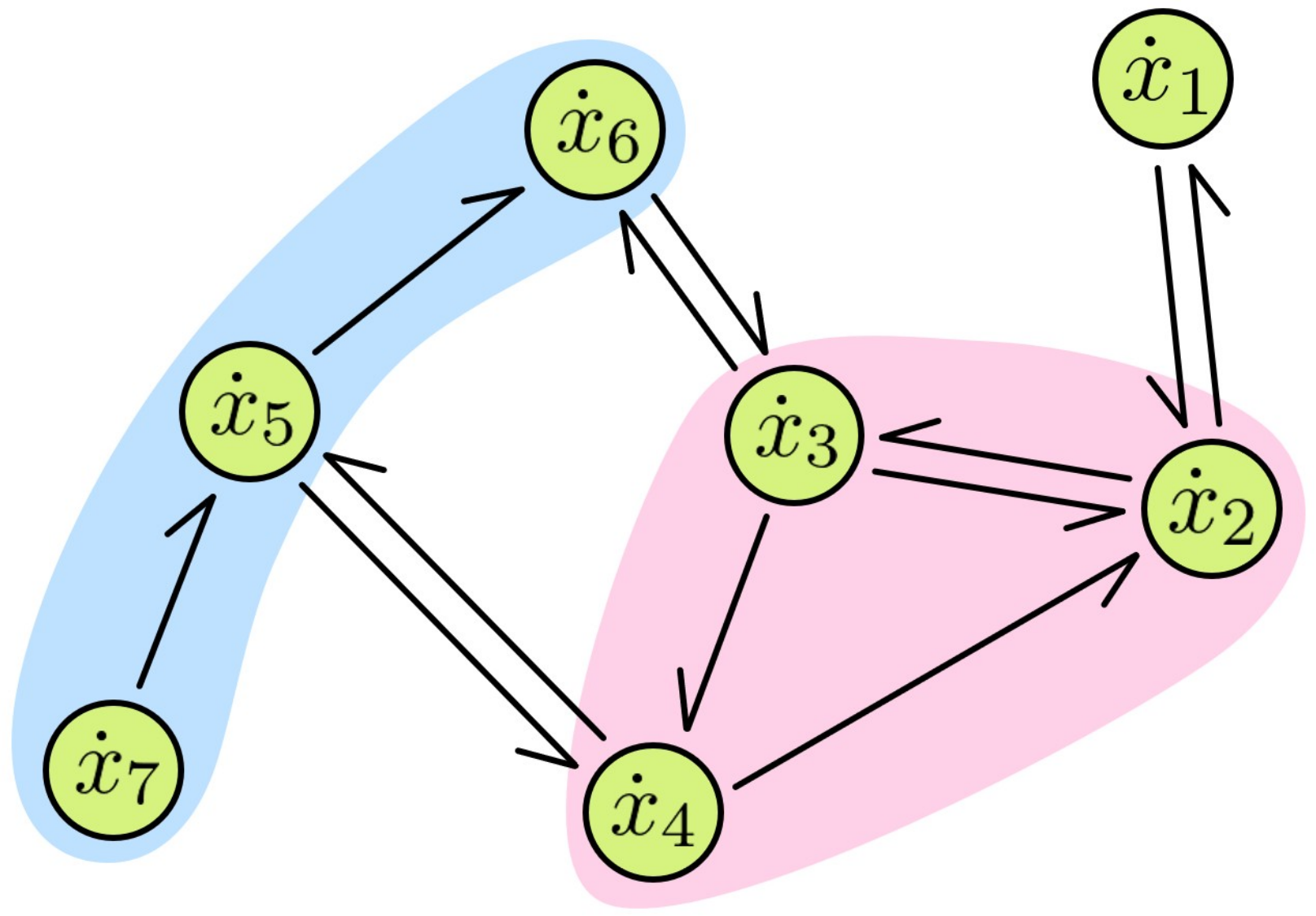}
\caption{A dynamical dependency graph illustrating the entangled dynamics of a multi-agent system. Each node corresponds to the dynamics of agent $i$, and a directed edge from node $i$ to node $j$ indicates that the dynamics of agent $j$ depends on that of agent $i$, i.e., $\frac{\partial \dot{x}_j}{\partial \dot{x}_i} \neq 0$. For example, the dynamics of the subsystem enclosed by the blue shaded area is not entangled, whereas that of the subsystem enclosed by the pink shaded area is entangled.}
\label{fig:Entangled_graph}
\end{figure}

The gradient flow method has been explored as a general framework for multi-agent control (see, e.g., \cite{schwager2009gradient,cortes2017coordinated}), achieving decentralization with convergence guarantees by leveraging LaSalle’s invariance principle. However, such guarantees do not necessarily hold in time-varying settings. In addition, a constraint-driven method developed in \cite{notomista2019constraint} provides a general framework for multi-agent control. However, there are several aspects not addressed by \cite{notomista2019constraint}, particularly when the underlying graph exhibits certain topologies or when the associated cost function is time-varying, as discussed in detail in Appendix~\ref{Section:appendix}.

Another challenge arises from certain forms of dynamics interdependency in multi-agent systems. Specifically, rather than acting based on its own state and those of its neighbors, each agent cannot make its decisions or determine its actions without explicitly knowing those of its neighbors, while these neighbors face the same dilemma simultaneously (see, e.g., \cite{farina2012distributed,trodden2017distributed,lin2025heterogeneous,lee2015multirobot,stewart2010cooperative,butler2024collaborative}), such as the subsystem enclosed by the pink shaded area illustrated in Fig.~\ref{fig:Entangled_graph}. Existing work has sought to address this type of challenge through approaches such as modeling neighbor dynamics as bounded disturbance within a robust control framework \cite{farina2012distributed,trodden2017distributed}, zero-order-hold approximation of neighbor dynamics \cite{lin2025heterogeneous}, centralized aggregation followed by decentralized approximation \cite{lee2015multirobot}, and iterative negotiation toward a coherent solution \cite{stewart2010cooperative,butler2024collaborative}. Although these methods have enabled practical advances, their applicability often involves trade-offs among performance conservatism, computational complexity, and provable convergence guarantees. An exact decentralization mechanism without reliance on the aforementioned approximations has remained elusive.

These considerations motivate a framework that is general enough to accommodate diverse multi-agent control tasks, while also being able to deal with time-varying cost functions and synthesize controllers in a decentralized manner. The main contributions of this paper are summarized as follows.
\begin{itemize}
\item We analyze two closely related and widely adopted multi-agent control approaches, \cite{notomista2019constraint} and \cite{santos2019decentralized}, and show that there are some missing pieces not covered in \cite{notomista2019constraint} and several theoretical limitations of \cite{santos2019decentralized}.
\item We propose disentangled control, a general framework such that entangled dynamics of multi-agent systems do not arise, thereby enabling decentralized control synthesis. Within this framework, we consider two related structures leading to entangled dynamics and develop corresponding synthesis methods with convergence guarantees. The proposed framework handles time-varying objective functions, accommodates multi-objective optimization, supports real-time computation, and is applicable to a broad range of multi-agent control problems. 
\item The generality and effectiveness of the proposed framework are demonstrated through one simulation and two physical experiments, namely time-varying leader-follower formation control, decentralized coverage control for time-varying densities, and safe formation navigation in an environment with densely distributed obstacles. Notably, the problem of coverage control for time-varying density functions has remained a long-standing open challenge, for which, to the best of our knowledge, no decentralized control synthesis method without approximations has been established over the past two decades.
\end{itemize}

The remainder of this paper proceeds as follows. In Section~\ref{Section:Preliminaries},  we present necessary preliminaries used throughout the paper. In Section~\ref{Section:DisentangledControl}, first, we identify entangled dynamics of multi-agent systems, which causes the key challenge of decentralized control synthesis. We also detail new insights into the closely related works \cite{notomista2019constraint} and \cite{santos2019decentralized} in Appendix~\ref{Section:appendix}. Then, we propose disentangled control, a general multi-agent control framework that resolves this challenge with convergence guarantees. The effectiveness of disentangled control is demonstrated across three experiments in Section~\ref{Section:Applications}. Finally, Section~\ref{Section:Conclusion} concludes the paper and highlights the potential of disentangled control for broader applications.

\section{Preliminaries} \label{Section:Preliminaries}
Since the dynamics of a wide range of systems can be expressed in a control affine form \cite{khalil2002control}, in this paper, we consider the dynamical system
\begin{equation}
\label{eqn:controlaffine}
\dot{x} = f(x) + g(x) u,
\end{equation}
where $f: \mathbb{R}^n \to \mathbb{R}^n$ and $g: \mathbb{R}^n \to \mathbb{R}^{n\times m}$ are locally Lipschitz,
and $u \in \mathcal{U} \subseteq \mathbb{R}^m$ is the control input. 

Given a real-valued function $F : \mathbb{R}^n \times \mathbb{R}_{\geq0} \to \mathbb{R}$, its total derivative along a trajectory of \eqref{eqn:controlaffine} is given by
\begin{equation*}
\frac{d}{dt} F(x, t)
= 
{\nabla_{\!x}} F(x,t) \, \dot{x}
+ 
{\partial_t}F(x, t),
\end{equation*}
in which
\begin{equation*}
{\nabla_{\!x}} F \coloneqq
\frac{\partial}{\partial x} F \in \mathbb{R}^{1\times n},
\quad
{\partial_t}F
\coloneqq
\frac{\partial}{\partial t} F \in \mathbb{R}.
\end{equation*}
Throughout this paper, for notational compactness, we sometimes also use the dot notation $\dot{(\cdot)}$ to denote the total derivative of a quantity. To distinguish the Jacobian notation from the gradient notation, for a vector-valued function $G:\mathbb{R}^n\times\mathbb{R}_{\ge 0}\to\mathbb{R}^p$, we denote its Jacobian by
\begin{equation*}
\frac{\partial G}{\partial x} \in \mathbb{R}^{p\times n}. 
\end{equation*}

The standard approach in multi-agent control is to formulate a multi-agent task in terms of certain performance cost or task objective, and to design a control law such that this cost (or objective) function is minimized \cite{schwager2009gradient,cortes2017coordinated}. 

Consider the objective of minimizing a real-valued cost function $V:\mathbb{R}^n \to \mathbb{R}_{\geq 0}$ of class $C^1$. A natural reference dynamics for this purpose is the gradient flow
\begin{equation*}
\label{eqn:GF}
\dot{x} = -{\nabla_{\!x}} V(x)^{\!\top},
\end{equation*}
as $\dot{V}(x) = {\nabla_{\!x}} V(x) \,\dot{x} = - \|{\nabla_{\!x}} V(x)\|^2 \leq 0$. The gradient flow method can be viewed as the continuous-time version of the gradient descent method (see, e.g., \cite{ambrosio2005gradient,boyd2004convex}). 

However, if the cost function depends explicitly on time, i.e., $V: \mathbb{R}^n \times \mathbb{R}_{\geq 0} \to \mathbb{R}_{\geq 0}$, the gradient flow method does not necessarily guarantee its convergence, (see, e.g., \cite{lee2015multirobot}). To deal with time-varying cost functions, one can resort to the following standard convergence lemma for differential inequalities (see, e.g., \cite{lin1996smooth,khalil2002control}).
\begin{lemma}
\label{lemma:TVCLF}
Let $V:\mathbb{R}^n\times\mathbb{R}_{\ge0}\to\mathbb{R}_{\ge0}$ be a class $C^1$ function, and $\alpha:\mathbb{R}_{\ge0}\to\mathbb{R}_{\ge0}$ be a class $\mathcal{K}_{\infty}$ function. If there exists a controller $u$ such that a trajectory $x:\mathbb{R}_{\ge0}\to\mathbb{R}^n$ of \eqref{eqn:controlaffine} exists, along which
\begin{equation}
\label{eqn:cost_decay_constraint}
\frac{d}{dt} V(x(t),t) + \alpha(V(x(t),t)) \le 0,
\end{equation}
$\forall t\ge0$, then $\lim_{t\to\infty}V(x(t),t)=0$.
\end{lemma}
\begin{proof}
Define $W(t) \coloneqq V(x(t),t)$. Then, by \eqref{eqn:cost_decay_constraint}, $\dot{W}(t)+\alpha(W(t))\le0$, $\forall t\geq 0$. Since $\alpha$ is of class $\mathcal K_{\infty}$, we have $\dot{W}(t)\le -\alpha(W(t))\le 0$, $\forall t\geq 0$, which implies that $W(t)$ is nonincreasing. As $W(t)\ge0$, $\forall t\geq 0$, the limit $W_{\infty} \coloneqq \lim_{t\to\infty}W(t)$ exists and satisfies $W_{\infty}\ge0$. We now show that $W_{\infty}=0$ by contradiction. Assume that $W_{\infty}>0$. Given any $W_c \in (0,W_{\infty})$, since $\alpha$ is of class $\mathcal K_{\infty}$, we have $\alpha(W_c)>0$. Thus, $\exists T\geq 0$ such that $W(t)\geq W_c$, $\forall t \geq T$. Hence, we have $\dot{W}(t)\le -\alpha(W(t))
\leq -\alpha (W_c)$, $\forall t\ge T$, leading to $W(t) \leq W(T)-(t-T)\,\alpha(W_c)\coloneqq \overline{W}_T (t)$. Thus, we have $\lim_{t\to\infty} \overline{W}_T (t) = -\infty$, which contradicts the fact that $W(t)\geq 0$, $\forall t\geq 0$. Therefore, $W_{\infty}=0$, i.e., $\lim_{t\to\infty}V(x(t),t)=0$.
\end{proof}

The problem of minimizing a time-varying cost function (TVCF) seems to be solved by Lemma~\ref{lemma:TVCLF}. However, its decentralized realization in multi-agent systems is nontrivial, which is the main focus of this paper. The main challenge arises from certain forms of dynamics
interdependency across different agents as mentioned in Section~\ref{Section:Introduction},  which is explored in the following section.

\section{Disentangled Control} \label{Section:DisentangledControl}
Consider a system of $N \in \mathbb{Z}^+$ agents, each of whose states of interest is denoted by $x_i \in \mathbb{R}^{n_i}$, $\forall i \in \mathcal{N} \coloneqq \{1, \dots, N\}$. The vector $x = [x_1^\top, \dots, x_N^\top]^\top \in \mathbb{R}^{\sum_{i \in \mathcal{N}}n_i}$ consists of the states of all $N$ agents. Each agent is governed by the control affine dynamics 
\begin{equation}
\label{eqn:controlaffine_ma}
\dot{x}_i = f_i(x_i) + g_i(x_i) u_i, \quad \forall i \in \mathcal{N}, 
\end{equation}
where $f_i: \mathbb{R}^{n_i} \to \mathbb{R}^{n_i}$ and $g_i: \mathbb{R}^{n_i} \to \mathbb{R}^{n_i \times m_i}$ are locally Lipschitz, and $u_i \in \mathcal{U}_i \subseteq \mathbb{R}^{m_i}$ is the control input for agent $i$.

\subsection{Entangled Dynamics of Multi-Agent Systems} \label{Section:EntangledDynamics}
Consider the multi-agent system illustrated in {\rm Fig.~\ref{fig:Entangled_graph}}. At a given time instant, suppose agent 3 aims to compute its dynamics $\dot{x}_3$. This requires the knowledge of $\dot{x}_2$ and $\dot{x}_6$ since $\frac{\partial \dot{x}_3}{\partial \dot{x}_2} \neq 0$ and $\frac{\partial \dot{x}_3}{\partial \dot{x}_6} \neq 0$. However, computing $\dot{x}_2$ in turn requires the information of $\dot{x}_1$, $\dot{x}_3$, and $\dot{x}_4$, because $\frac{\partial \dot{x}_2}{\partial \dot{x}_1} \neq 0$, $\frac{\partial \dot{x}_2}{\partial \dot{x}_3} \neq 0$, and $\frac{\partial \dot{x}_2}{\partial \dot{x}_4} \neq 0$, while $\dot{x}_1$ in turn depends on $\dot{x}_2$, and $\dot{x}_4$ itself also depends on $\dot{x}_3$ and $\dot{x}_5$. Similarly, computing $\dot{x}_6$ requires the knowledge of $\dot{x}_3$ and $\dot{x}_5$ as $\frac{\partial \dot{x}_6}{\partial \dot{x}_3} \neq 0$ and $\frac{\partial \dot{x}_6}{\partial \dot{x}_5} \neq 0$, where $\dot{x}_5$ further depends on $\dot{x}_4$ which again depends on $\dot{x}_3$. As a result, agent 3 cannot compute $\dot{x}_3$ in a decentralized manner. Instead, $\dot{x}_1, \dot{x}_2, \dot{x}_3, \dot{x}_4, \dot{x}_5$, and $\dot{x}_6$ must be solved simultaneously in a centralized fashion.

The aforementioned interdependent relationships of dynamics form a strongly coupled structure, which we refer to as \textbf{\textit{entangled dynamics}} of multi-agent systems.
By \textit{entangled dynamics} we mean that in order for an agent to obtain its dynamics, it must have access to the dynamics of certain other agents, and in turn, those agents require this agent’s dynamics (and possibly the dynamics of additional agents) to obtain their own dynamics. This creates a more complicated version of the ``chicken-and-egg'' coupling issue, as illustrated in {\rm Fig.~\ref{fig:Entangled_graph}}.

To show how such entangled dynamics naturally arises in multi-agent systems and why its decentralized realization is challenging, we discuss a representative problem that has received considerable attention in the literature, i.e., coverage control for time-varying density functions (which is also referred to as time-varying coverage control).

To start with, the (time-invariant) coverage control \cite{cortes2004coverage} concerns the problem of minimizing 
\begin{equation*}
C(x) = \sum_{i \in \mathcal{N}} \int_{\mathcal{V}_i(x)}\! 
\|x_i - q\|^2 \phi(q)
\,\mathrm{d}q,
\end{equation*}
which encodes the lack of coverage quality of a bounded convex domain of interest $\mathcal{D} \subset \mathbb{R}^d$, where
\begin{equation*}
\mathcal{V}_i(x) 
= 
\left\{ 
q \in \mathcal{D} \,\big|\, 
\|q - x_i\| 
\leq
\|q - x_j\|,\; \forall j\in \mathcal{N} \setminus\! \{i\} 
\right\}
\end{equation*}
is the Voronoi cell of agent $i$, and $\phi: \mathcal{D} \to \mathbb{R}_{\geq 0}$ is a continuous density distribution function encoding the relative importance or informativeness of a point $q \in \mathcal{D}$, which is not identically zero on $\mathcal{V}_i(x)$, $\forall i \in \mathcal{N}$. According to \cite{cortes2004coverage}, the gradient flow method (also known as the continuous-time limit of Lloyd's algorithm \cite{lloyd1982least}),
\begin{equation*}
\dot{x}_i = \kappa (G_i(x) - x_i), \quad \forall i \in \mathcal{N},
\end{equation*}
in which $\kappa > 0$, and $G_i(x) = \frac{1}{M_i(x)} \int_{\mathcal{V}_i(x)}\! q\phi(q) \,\mathrm{d}q$ with $M_i(x) = \int_{\mathcal{V}_i(x)}\! \phi(q) \,\mathrm{d}q$ is the center of mass of $\mathcal{V}_i(x)$, ensures asymptotic convergence to a critical point of $C(x)$ with respect to $x$ (guaranteed by LaSalle’s invariance principle), i.e., a centroidal Voronoi tessellation (CVT),
\begin{equation*}
x_i^{*} = G_i(x^{*}), \quad \forall i \in \mathcal{N},
\end{equation*}
in a decentralized manner.

When the density distribution function becomes time-varying, i.e., $\phi: \mathcal{D} \times \mathbb{R}_{\geq 0} \to \mathbb{R}_{\geq0}$, the Lloyd’s algorithm does not guarantee asymptotic convergence to a time-varying CVT (TV-CVT), as analyzed in \cite{lee2015multirobot}. In the time-varying setting, the corresponding locational cost function becomes
\begin{equation*}
C(x,t) = \sum_{i \in \mathcal{N}} \int_{\mathcal{V}_i(x)}\! 
\|x_i - q\|^2 \phi(q,t)
\,\mathrm{d}q,
\end{equation*}
in which $\phi(q,t)$ is continuous in $q$, continuously differentiable in $t$, and not identically zero on $\mathcal{V}_i(x)$, $\forall i \in \mathcal{N}$, $\forall t \geq 0$.
The corresponding time-varying mass and center of mass of $\mathcal{V}_i(x)$ take the forms of
\begin{align*}
M_i(x,t) &= \int_{\mathcal{V}_i(x)}\! \phi(q,t)
\,\mathrm{d}q
\end{align*}
and
\begin{align*}
G_i(x,t) &= \frac{1}{M_i(x,t)} \int_{\mathcal{V}_i(x)}\! q\phi(q,t)
\,\mathrm{d}q,
\end{align*}
respectively. However, it is challenging to ensure asymptotic convergence to a TV-CVT in a decentralized manner. In the literature, \cite{cortes2002coverage} and \cite{lee2015multirobot} establish the theoretical foundations for this problem.

Initially, \cite{cortes2002coverage} proposes the dynamics
\begin{align}
\label{eqn:cortes}
\dot{x}_i
= \dot{G}_i
- \left( \kappa + \frac{\dot{M}_i}{M_i} \right)(x_i - G_i),
\end{align}
$\forall i \in \mathcal{N}$, where $\kappa >0$ (and the arguments are omitted for notational simplicity), with certain quasi-static approximations of $\dot{G}_i$, $\dot{M}_i$, and ${\partial_t} \phi$. However, in general,
\begin{equation}
\dot{G}_i
\neq
{\partial_t} {G}_i
= \frac{1}{{M}_i}\int_{\mathcal{V}_i} \!
(q - {G}_i)\,
{\partial_t}{\phi}
\,\mathrm{d}q \notag
\end{equation}
because
\begin{equation*}
\dot{G}_i
= 
\frac{\partial G_i}{\partial x_i} \,\dot{x}_i 
+ \!
\sum_{j \in \mathcal{N}_i} \frac{\partial G_i}{\partial x_j} \,\dot{x}_j
+ 
{\partial_t}G_i,
\end{equation*}
where $\mathcal{N}_i$ denotes the Delaunay neighbors of agent $i$, and likewise $\dot{M}_i \neq {\partial_t} {M}_i = \int_{\mathcal{V}_i} {\partial_t} {\phi} \,\mathrm{d}q$.

A decade later, inspired by \cite{pimenta2009simultaneous}, \cite{lee2015multirobot} proposes the dynamics
\begin{align}
\label{eqn:Lee}
\dot{x} 
= \left(I - \frac{\partial G}{\partial x}\right)^{-1} 
\big(
\kappa (G - x) 
+ 
{\partial_t} G
\big),
\end{align}
where $\kappa >0$, $G = [G_1^\top, \dots, G_N^\top]^\top$ (and the arguments are omitted for notational simplicity). Although the centralized dynamics \eqref{eqn:Lee} ensures asymptotic convergence to a TV-CVT without any quasi-static assumptions on ${\partial_t}\phi$, its decentralized implementation requires truncated Neumann series approximations of the matrix inverse $\left(I - \frac{\partial G}{\partial x}\right)^{-1}$.

From either \eqref{eqn:cortes} or \eqref{eqn:Lee}, one can observe that for each agent~$i$ to compute its dynamics $\dot{x}_i$, it needs to know the dynamics of its Delaunay neighbors, so the corresponding dynamics dependency graph is isomorphic to the underlying Delaunay graph, and the entangled dynamics arises.

More recently, \cite{santos2019decentralized} also explores this problem, but does not achieve decentralized control due to the entangled dynamics, which is discussed in detail in Appendix~\ref{Section:appendix}. Despite the progress in the literature, decentralized coverage control for time-varying density functions without resorting to simplifying approximations (such as quasi-static assumptions or Neumann series truncations) remains a challenging problem that, to the best of our knowledge, has not been resolved yet. 

\subsection{Disentangled Control of Multi-Agent Systems} \label{Section:subsecdisentangledcontrol}
Lemma~\ref{lemma:TVCLF} provides a systematic approach to control synthesis for time-varying objectives and can be directly leveraged by stacking all agents' states and carrying out a centralized synthesis. However, the centralized approach, as discussed in Section~\ref{Section:Introduction}, is not an ideal architecture for multi-agent systems.

For a multi-agent task, if the objective is to minimize $V(x,t) = \sum_{i \in \mathcal{N}} V_i(x_i,t)$ and the cost function for each agent $V_i: \mathbb{R}^{n_i} \times \mathbb{R}_{\geq 0} \to \mathbb{R}_{\geq 0}$ depends only on $x_i$ and $t$, then it can be achieved in a decentralized manner by directly applying Lemma~\ref{lemma:TVCLF} to each agent. However, when the cost function for each agent also depends on the states of other agents or when multiple agents share a single cost function that depends on their states, decentralized control synthesis for each agent becomes nontrivial. This is because the total derivatives of these types of cost functions involve the dynamics of multiple agents simultaneously, which typically requires centralized control synthesis for all the agents involved. 

Against this backdrop, developing a decentralized framework where each agent $i \in \mathcal{N}$ can synthesize its control input $u_i$ without access to any simultaneously computed control input $u_j$, for any $j \in \mathcal{N} \! \setminus \!\!\{i\}$, becomes the focus of our investigation in this section. To this end, we consider two related TVCF structures that lead to entangled dynamics and constitute the main obstacles to directly applying Lemma~\ref{lemma:TVCLF} to multi-agent systems in a decentralized manner. Furthermore, we develop corresponding decentralized control synthesis frameworks to address these two structures, which we refer to as \textbf{\textit{disentangled control}} of multi-agent systems. By \textit{disentangled control} we mean  a control design principle such that the entangled dynamics of multi-agent systems does not occur, thereby enabling decentralization. 
\begin{definition}
\label{def:SCTVCLF}
Given a group of agents with index set $\mathcal{N}$, for any $\mathcal{N}^{\mathrm{SE}} \in \mathcal{S}^{\mathrm{SE}} \subseteq 2^{\mathcal{N}} \setminus \!\{\varnothing\}$, where $2^{\mathcal{N}}$ denotes the power set of $\mathcal{N}$, a class $C^1$ function
\begin{equation*}
V_{\mathcal N^{\mathrm{SE}}}:
\mathbb{R}^{\sum_{i\in\mathcal N^{\mathrm{SE}}}n_i}\times\mathbb{R}_{\ge0}
\to\mathbb{R}_{\ge0}
\end{equation*}
is called a shared-entangled TVCF (SE-TVCF) if it depends only on
$\{x_i\}_{i\in\mathcal N^{\mathrm{SE}}}$ and $t$. 
\end{definition}

Intuitively, the SE-TVCFs arise in scenarios where there exist groupwise shared objectives, such as in multi-agent collaboration. Inspired by the idea in \cite{wang2017safety}, we propose the following optimization framework (in which the arguments are omitted for notational simplicity) for such cost functions shared by a group of agents to synthesize the controller $u_i(\{x_j\}_{j \in \mathcal{N}^{\mathrm{SE}}},t)$ for each agent $i \in \mathcal{N}^{\mathrm{SE}}$ in a decentralized manner.
\vspace{7pt}
\hrule height 0.8pt
\vspace{7pt}
\noindent\textbf{Disentangled Control} (for SE-TVCFs)
\begin{equation}
\label{eqn:firstOptimization}
\begin{aligned}
    & \argmin_{{u}_i \in \mathcal{U}_i}
    & & u_i^\top H_i u_i \\
    & \,\,\,\,\,\,\,\,\textnormal{s.t.}
    & & A_i^{\mathcal{N}^{\mathrm{SE}}} u_i
    + b_i^{\mathcal{N}^{\mathrm{SE}}} \leq 0
\end{aligned}
\end{equation}
in which $H_i \in \mathbb{S}_{++}^{m_i}$,
\begin{equation*}
A_i^{\mathcal{N}^{\mathrm{SE}}} 
= 
{\nabla_{\!x_i}}\!V_{\mathcal{N}^{\mathrm{SE}}} \,g_i,
\end{equation*}
\begin{equation*}
b_i^{\mathcal{N}^{\mathrm{SE}}} 
= 
{\nabla_{\!x_i}}\!V_{\mathcal{N}^{\mathrm{SE}}} \,f_i
+
w_i^{\mathcal{N}^{\mathrm{SE}}} 
\big(  {\partial_t}V_{\mathcal{N}^{\mathrm{SE}}} 
+ \alpha
\left(V_{\mathcal{N}^{\mathrm{SE}}} 
\right)
\big),
\end{equation*}
\begin{equation*}
\sum_{i \in \mathcal{N}^{\mathrm{SE}}} w_i^{\mathcal{N}^{\mathrm{SE}}} = 1, \quad w_i^{\mathcal{N}^{\mathrm{SE}}} \geq 0,
\end{equation*}
and $\alpha:\mathbb{R}_{\ge0}\to\mathbb{R}_{\ge0}$ is of class $\mathcal{K}_{\infty}$.
\vspace{6pt}
\hrule height 0.8pt
\vspace{3pt}
\begin{theorem}
\label{thm:constraintallocation}
For any $\mathcal{N}^{\mathrm{SE}}$ as in {\rm Definition \ref{def:SCTVCLF}}, if every agent $i \in \mathcal{N}^{\mathrm{SE}}$ executes, along a trajectory of \eqref{eqn:controlaffine_ma}, the controller $u_i$ obtained from \eqref{eqn:firstOptimization}, $\forall t \geq 0$, then 
\begin{equation*}
\lim_{t \to \infty}
V_{\mathcal{N}^{\mathrm{SE}}} (\{x_i (t)\}_{i \in \mathcal{N}^{\mathrm{SE}}}, t) = 0.
\end{equation*}
\end{theorem}
\begin{proof}
Summing $A_i^{\mathcal{N}^{\mathrm{SE}}} u_i + b_i^{\mathcal{N}^{\mathrm{SE}}} 
\leq 0$ over $i \in \mathcal{N}^{\mathrm{SE}}$ yields 
\begin{align*}
&\sum_{i \in \mathcal{N}^{\mathrm{SE}}}
{\nabla_{\!x_i}}\!V_{\mathcal{N}^{\mathrm{SE}}} (\{x_i\}_{i \in \mathcal{N}^{\mathrm{SE}}}, t) \,\dot{x}_i
\notag \\
&+ 
\sum_{i \in \mathcal{N}^{\mathrm{SE}}} w_i^{\mathcal{N}^{\mathrm{SE}}}
\Big(
{\partial_t} V_{\mathcal{N}^{\mathrm{SE}}} (\{x_i\}_{i \in \mathcal{N}^{\mathrm{SE}}}, t) \notag \\
&+ \alpha(V_{\mathcal{N}^{\mathrm{SE}}} (\{x_i\}_{i \in \mathcal{N}^{\mathrm{SE}}}, t))
\Big)
\leq 0
\end{align*}
Since $\sum_{i \in \mathcal{N}^{\mathrm{SE}}} w_i^{\mathcal{N}^{\mathrm{SE}}} = 1$, we can obtain
\begin{equation*}
\frac{d}{dt} V_{\mathcal{N}^{\mathrm{SE}}} (\{x_i\}_{i \in \mathcal{N}^{\mathrm{SE}}},t) 
+
\alpha (V_{\mathcal{N}^{\mathrm{SE}}} (\{x_i\}_{i \in \mathcal{N}^{\mathrm{SE}}},t)) 
\leq 0.
\end{equation*}
By Lemma~\ref{lemma:TVCLF}, this concludes the proof.
\end{proof}

For any agent $i \in \mathcal{N}^{\mathrm{SE}}$ to compute its controller $u_i$ by \eqref{eqn:firstOptimization}, it does not require access to $u_j$, $\forall j \in \mathcal{N}^{\mathrm{SE}} \setminus \! \{i\}$, i.e., its dynamics $\dot{x}_i$ does not involve $\dot{x}_j$, $\forall j \in \mathcal{N}^{\mathrm{SE}} \setminus \! \{i\}$. Note that in a specific multi-agent task, there can be multiple different such sets $\mathcal{N}^{\mathrm{SE}}$ in which one agent gets involved. In addition, the priority coefficient $w_i^{\mathcal{N}^{\mathrm{SE}}}$ assigns the relative tightness of each constraint according to its importance or priority, or each agent's characteristics or capabilities, so we also refer to \eqref{eqn:firstOptimization} as \textbf{\textit{constraint allocation}}. Depending on the requirements or preferences of a specific task, such priority coefficients can be fixed, state-dependent, or time-varying, and they can be predefined, adaptively updated, or learned online in practice.

Next, we consider a special form of TVCF associated with each individual agent, which is relevant to the long-standing open problem of decentralized coverage control for time-varying density functions.
\begin{definition}
\label{def:PCTVCLF}
Given a group of agents with index set $\mathcal{N}$, for each $i\in\mathcal{N}$, a class $C^1$ function
\begin{equation*}
V_i^{\mathrm{PE}}: \mathbb{R}^{n_i + \sum_{j \in \mathcal{N}_i^{\mathrm{PE}}} n_j} \times \mathbb{R}_{\geq 0} \to \mathbb{R}_{\geq 0}
\end{equation*}
is called a private-entangled TVCF (PE-TVCF) if it depends only on
$x_i$, $\{x_j\}_{j \in \mathcal{N}_i^{\mathrm{PE}}}$, and $t$, where
$\mathcal{N}_i^{\mathrm{PE}} \subseteq \mathcal{N} \setminus\!\{i\}$ denotes the index set of the agents whose states are involved in $V_i^{\mathrm{PE}}$. The associated cost graph $\mathcal{G}^{\mathrm{PE}}=(\mathcal{N},\mathcal{E}^{\mathrm{PE}})$ contains an edge from $j$ to $i$ whenever $j \in \mathcal{N}_i^{\mathrm{PE}}$, where the edge is undirected if $j \in \mathcal{N}_i^{\mathrm{PE}}$ and $i \in \mathcal{N}_j^{\mathrm{PE}}$.
\end{definition}

According to Definition~\ref{def:PCTVCLF}, if the underlying cost graph $\mathcal{G}^{\mathrm{PE}}$ of a multi-agent task has certain special topologies, then Lemma~\ref{lemma:TVCLF} can be implemented in a decentralized manner directly. For example, when $\mathcal{G}^{\mathrm{PE}}$ is a directed path, such as the subgraph enclosed by the blue shaded area in Fig.~\ref{fig:Entangled_graph}, then Lemma~\ref{lemma:TVCLF} can be implemented in a decentralized manner from one agent to the next sequentially. However, our goal is to develop a general framework of decentralized control for multi-agent systems that can be applied to deal with any type of topology of the underlying cost graph $\mathcal{G}^{\mathrm{PE}}$. To this end, we propose the following optimization framework (where the arguments are omitted for notational simplicity), which enables the synthesis of the controller for each agent $i \in \mathcal{N}$ in a decentralized manner under arbitrary topology of $\mathcal{G}^{\mathrm{PE}}$.
\vspace{7pt}
\hrule height 0.8pt
\vspace{7pt}
\noindent\textbf{Disentangled Control} (for PE-TVCFs)
\begin{equation}
\label{eqn:secondOptimization}
\begin{aligned}
    & \argmin_{{u}_i \in \mathcal{U}_i}
    & & u_i^\top H_i u_i \\
    & \,\,\,\,\,\,\,\,\textnormal{s.t.}
    & & A_i^{\mathrm{PE}}\, u_i
    + b_i^{\mathrm{PE}}\leq 0
\end{aligned}
\end{equation}
in which $H_i \in \mathbb{S}_{++}^{m_i}$,
\begin{equation*}
A_i^{\mathrm{PE}} 
= 
{\nabla_{\!x_i}}\!V\,g_i,
\end{equation*}
\begin{equation*}
b_i^{\mathrm{PE}}
= 
{\nabla_{\!x_i}}\!V f_i
+
{\partial_t}V_i^{\mathrm{PE}} 
+ \alpha
\left(V_i^{\mathrm{PE}}
\right),
\end{equation*}
\begin{equation*}
V = \sum_{i \in \mathcal{N}} V_i^{\mathrm{PE}},
\end{equation*}
and $\alpha$ is a subadditive class $\mathcal{K}_{\infty}$ function.
\vspace{7pt}
\hrule height 0.8pt
\vspace{4pt}
\begin{theorem}
\label{theorem:disentanglePE}
If every agent $i \in \mathcal{N}$ executes, along a trajectory of \eqref{eqn:controlaffine_ma}, the controller $u_i$ obtained from \eqref{eqn:secondOptimization}, $\forall t \geq 0$, then
\begin{equation*}
\lim_{t \to \infty} V(x(t),t) 
= 
\lim_{t \to \infty}  \sum_{i \in \mathcal{N}} V_i^{\mathrm{PE}}(x_i(t), \{x_j(t)\}_{j \in \mathcal{N}_i^{\mathrm{PE}}}, t) = 0. 
\end{equation*}
\end{theorem}
\begin{proof}
Summing $A_i^{\mathrm{PE}} u_i + b_i^{\mathrm{PE}} \leq 0$ over $i \in \mathcal{N}$ yields
\begin{align*}
&{\nabla_{\!x}}V(x,t)
\,\dot{x} 
+ 
{\partial_t} V(x,t)
\leq 
- \sum_{i \in \mathcal{N}}
\alpha
(
V_i^{\mathrm{PE}}(x_i, \{x_j\}_{j \in \mathcal{N}_i^{\mathrm{PE}}}, t)
).
\end{align*}
Since $\alpha$ is a subadditive class $\mathcal{K}_{\infty}$ function, we have
\begin{equation*}
\sum_{i \in \mathcal{N}}
\alpha
(
V_i^{\mathrm{PE}}(x_i, \{x_j\}_{j \in \mathcal{N}_i^{\mathrm{PE}}}, t)
)
\geq 
\alpha (V(x,t)).
\end{equation*}
Thus, we can obtain
\begin{equation*}
\frac{d}{dt} V(x,t) +
\alpha(V(x,t))
\leq 0.
\end{equation*}
Therefore, using Lemma~\ref{lemma:TVCLF}, the proof is completed.
\end{proof}

Using the disentangled control framework \eqref{eqn:secondOptimization}, when any agent $i \in \mathcal{N}$ computes its controller $u_i$, it does not require access to $u_j$, $\forall j \in \mathcal{N} \setminus \! \{i\}$, i.e., its dynamics $\dot{x}_i$ does not involve $\dot{x}_j$, $\forall j \in \mathcal{N} \! \setminus \!\! \{i\}$, and the entangled dynamics as illustrated in Fig.~\ref{fig:Entangled_graph} disappears.

Note that although PE-TVCFs in Definition~\ref{def:PCTVCLF} can be viewed as a special case of SE-TVCFs in Definition~\ref{def:SCTVCLF}, their particular structure motivates the specific synthesis method in \eqref{eqn:secondOptimization}. In addition, the conclusions related to time-varying cost functions presented so far also hold for the time-invariant cases. Furthermore, when the optimization problem is infeasible, slack variables, adaptive class $\mathcal{K}_{\infty}$ functions, and/or hierarchical relaxation strategies (see, e.g., \cite{wang2017safety,lee2023hierarchical,lin2025hierarchy,xiao2021adaptive}) may be utilized, which, however, is not the main point of this paper. 

The main advantages of the control synthesis frameworks \eqref{eqn:firstOptimization} and \eqref{eqn:secondOptimization} are as follows.
\begin{itemize}
\item They are decentralized in the sense that when agent~$i$, for any $i \in \mathcal{N}$, computes its control input $u_i$, it does not require access to any simultaneously computed control input $u_j$, for any $j \in \mathcal{N} \setminus \! \{i\} $.
\item The disentangled control framework \eqref{eqn:secondOptimization} is applicable to cost graphs of arbitrary topology, including directed, undirected, and mixed graphs.
\item They are quadratic programs under affine control input constraints, and thus are ideal for real-time applications.
\item They accommodate simultaneous consideration of multiple objectives, as any additional constraint that encodes a certain objective can be plugged into them.
\end{itemize}

In summary, disentangled control provides a general framework for addressing a wide variety of multi-agent tasks, as it not only handles the problems that can be addressed by existing methods, but also applies to problems beyond their reach, such as the long-standing open problem, decentralized coverage control for time-varying density functions (without Neumann series truncations), which is demonstrated in detail in the following section.

\section{Applications and Experiments} \label{Section:Applications}
In this section, we apply the results developed in Section~\ref{Section:DisentangledControl} with $H_i = I$, $f_i = 0$, $g_i = I$, $\forall i \in \mathcal{N}$, to a team of mobile robots, a common object in multi-agent studies, for different tasks, where the state of interest refers to the positions of the robots (with $n_i=2$ for planar robots and $n_i=3$ for aerial robots). For physical experiments, we utilize a group of differential-drive mobile robots in a rectangular domain $\mathcal{D} \subset \mathbb{R}^2$ with $x$-axis ranging from $-1.6$ to $1.6$ meters and $y$-axis ranging from $-1$ to $1$ meter \cite{wilson2020robotarium}. Each robot operates with the help of the near-identity diffeomorphism \cite{olfati2002near}. Note that all the applications demonstrated in this section cannot be theoretically addressed with rigorous convergence guarantees under the methods proposed in \cite{notomista2019constraint,santos2019decentralized}, which is explained in detail in Appendix~\ref{Section:appendix}.

\subsection{Time-Varying Leader-Follower Formation Control} \label{Section:subsec_exp1}
\begin{figure*}[t]
\centering
\subfigure[Time Step $1$]{
\includegraphics[width=0.23\textwidth]{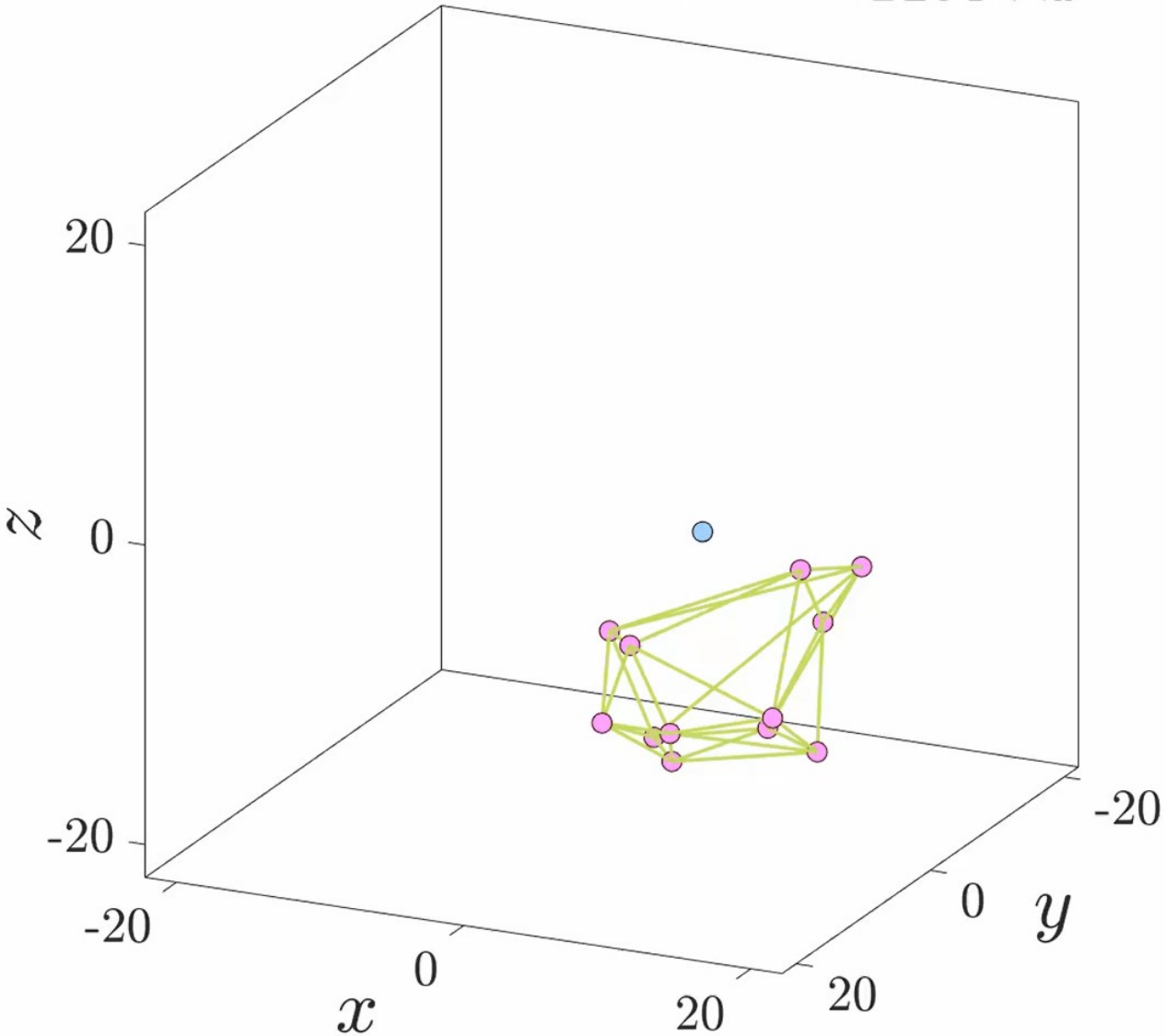}
\label{fig:exp1_1}
}
\subfigure[Time Step $300$]{
\includegraphics[width=0.23\textwidth]{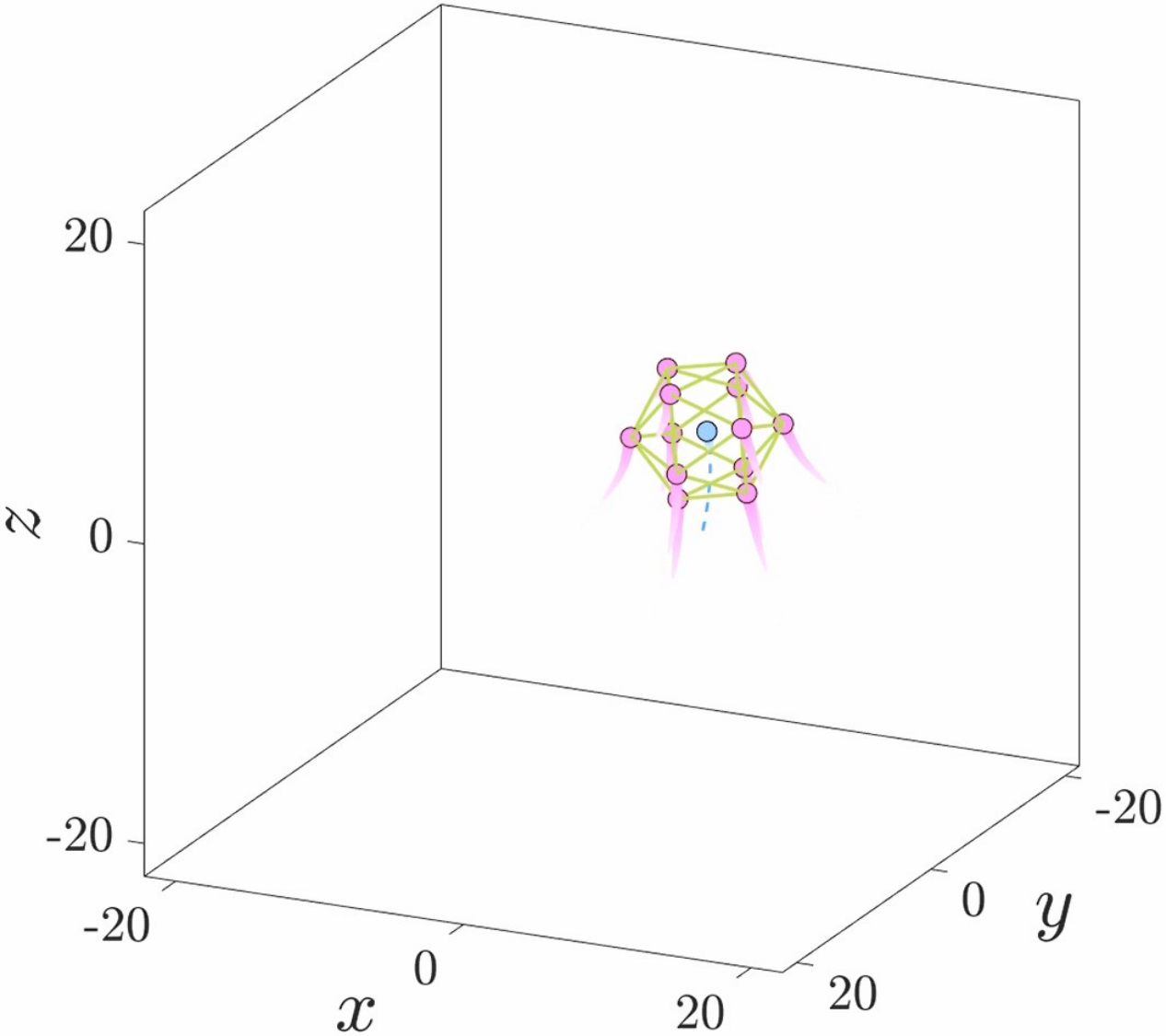}
\label{fig:exp1_300}
}
\subfigure[Time Step $6000$]{
\includegraphics[width=0.23\textwidth]{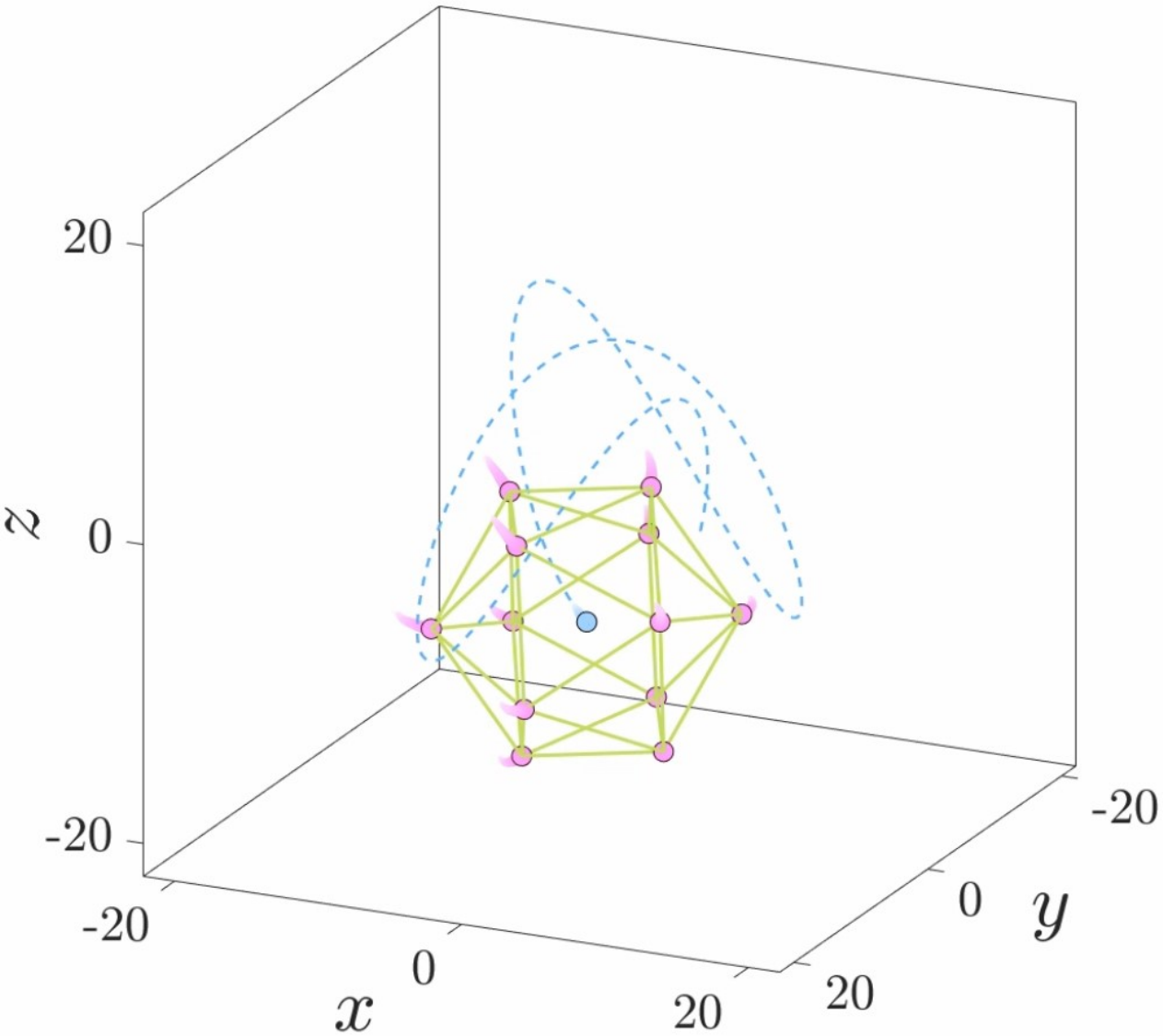}
}
\subfigure[Time Step $7000$]{
\includegraphics[width=0.23\textwidth]{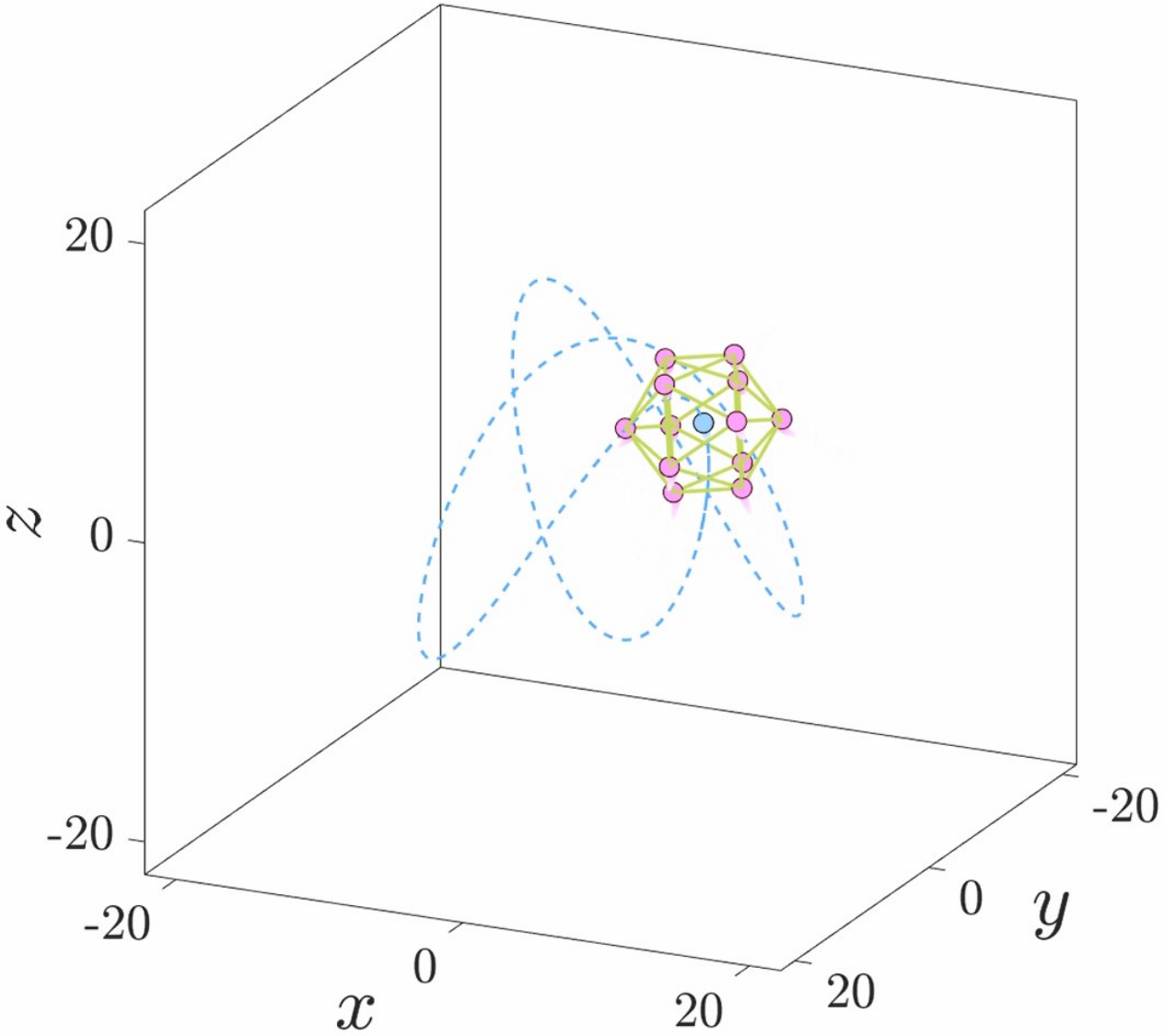}
}
\caption{Snapshots of the simulation of decentralized time-varying leader-follower formation control in Section~\ref{Section:subsec_exp1}. The twelve follower robots are represented by the pink dots, while the leader robot is represented by the blue dot. The green lines represent the undirected edges of the corresponding cost graph, and the blue dashed line represents the trajectory traveled by the leader robot. The full video of this experiment is available online at \url{https://youtu.be/DyY8JiIXzAc}.
}
\label{fig:exp1}
\end{figure*}
\begin{figure}[t]
\centering
\includegraphics[scale=0.3]{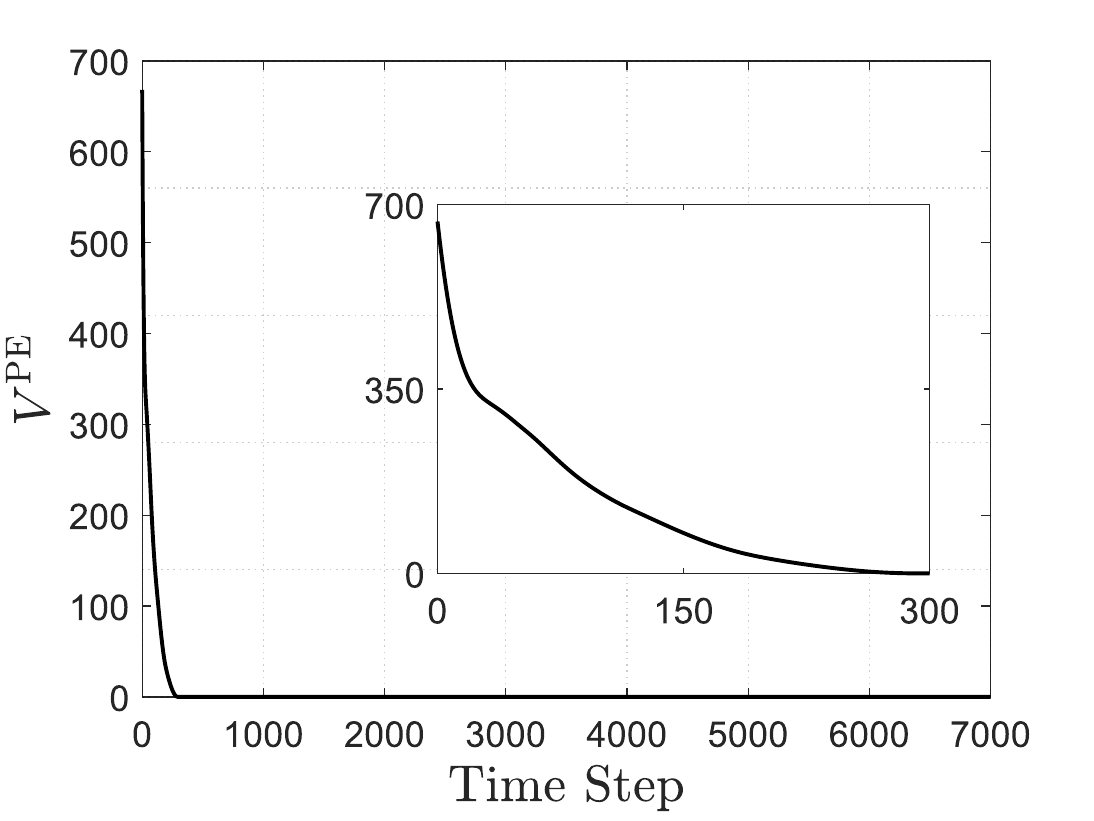}
\caption{The evolution of the total PE-TVCF of the follower robots defined in Section~\ref{Section:subsec_exp1} with respect to time steps.}
\label{fig:exp1_curve}
\end{figure}
In this subsection, we employ disentangled control to address time-varying leader-follower formation control, which can be applied to collaborative transportation in space, object caging or containment, swarm robotic art performance, etc.

Specifically, we consider $N=13$ robots with $12$ followers, whose states are denoted by $x_i \in \mathbb{R}^3$, $\forall i \in \mathcal{N} \!\setminus\!\! \{\ell\}$, and one leader, whose state trajectory is prescribed by $x_{\ell}(t) \in \mathbb{R}^3$ with $\ell = 13$. The goal is that the follower robots form a regular icosahedron of time-varying size, containing the leader robot in its geometric center. To this end, the PE-TVCF for each follower robot is defined as
\begin{align}
V_i^{\mathrm{PE}}(x_i, \{x_j\}_{j \in \mathcal{N}_i^{\mathrm{PE}}}, t)
&= 
\frac{w^{\mathrm{f}}}{2} \!\!\!\!
\sum_{j \in \mathcal{N}_i^{\mathrm{PE}}\setminus \!\{\ell\}}
\!\!\!\!\!
\left(
\| x_i - x_j\| - d_{ij}(t)
\right)^2 \notag \\
&\,\,\, + 
\frac{w^{\mathrm{\ell}}}{2}\!
\left(
\| x_i - x_{\ell}(t)\| - d_{i}^\ell(t)
\right)^2,
\label{eqn:VPE_i_exp1}
\end{align}
$\forall i \in \mathcal{N}\setminus\! \{\ell\}$, where the cost graph $\mathcal{G}^{\mathrm{PE}}$ consists of the vertices corresponding to the follower robots and edges of a regular icosahedron, augmented with an additional vertex corresponding to the leader robot and directed edges from the leader to each follower, $d_{ij}(t)>0$ denotes the edge length at time $t$, $d_i^\ell(t)>0$ denotes the circumradius (i.e., the distance from the geometric center to a vertex) at time $t$, and the weighting coefficients $w^{\mathrm{f}} >0$ and $w^{\mathrm{\ell}} >0$ determine the relative importance of achieving the goals between maintaining the time-varying formation and following the leader. 
\begin{figure}[t]
\centering
\includegraphics[scale=0.3]{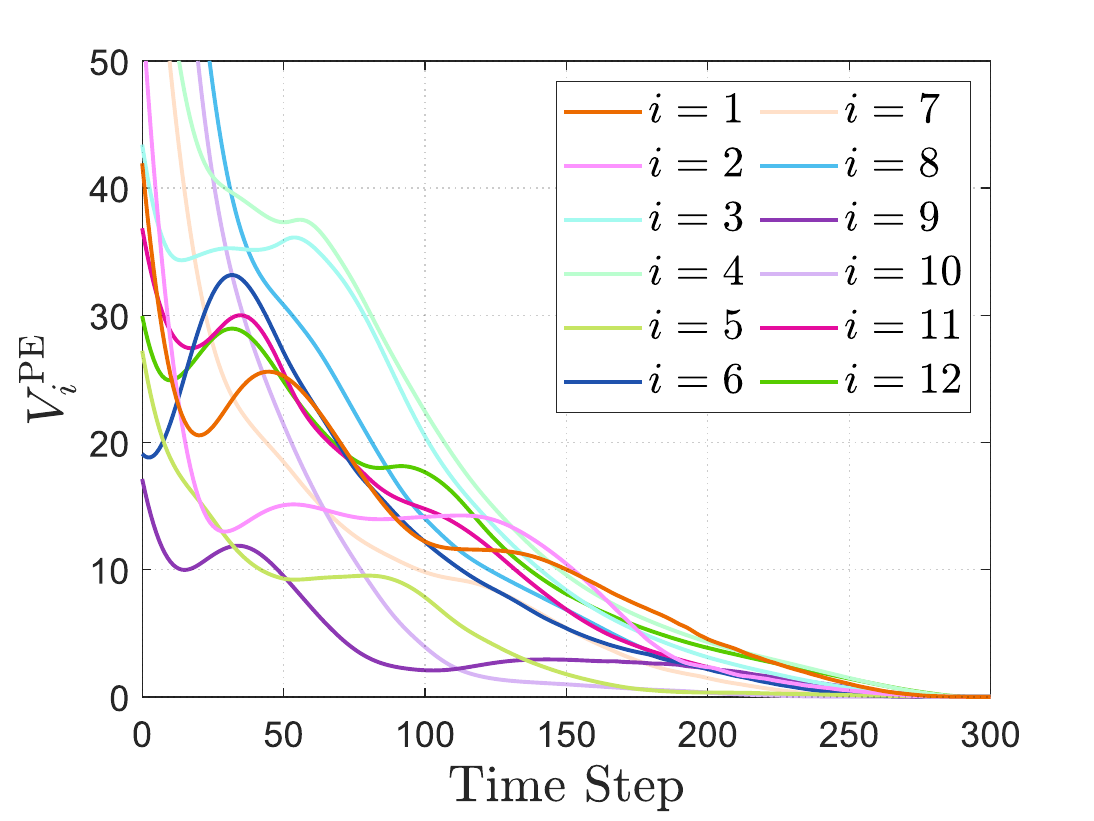}
\caption{The evolution of the PE-TVCF for each follower robot defined in Section~\ref{Section:subsec_exp1} with respect to time steps.}
\label{fig:exp1_curve_sub}
\end{figure}

According to \eqref{eqn:VPE_i_exp1}, we can obtain
\begin{align}
{{\nabla_{\!x_i}}\!V} 
&= 
2 w^{\mathrm{f}} \!\!\!
\sum_{j \in \mathcal{N}_i^{\mathrm{PE}}\setminus \!\{\ell\}}\! 
\frac{\|x_i - x_j\| - d_{ij}(t)}{\|x_i - x_j\|} (x_i - x_j)^\top \notag \\
&\quad\, +
w^{\mathrm{\ell}} \frac{\|x_i - x_{\ell}(t)\| - d_i^{\ell}(t)}{\|x_i - x_{\ell}(t)\|} (x_i - x_{\ell}(t))^\top
\label{eqn:gradVPE_i_exp1}
\end{align}
and
\begin{align}
&{\partial_t} V_i^{\mathrm{PE}}
=
w^{\mathrm{f}} \!\!\!
\sum_{j \in \mathcal{N}_i^{\mathrm{PE}}\setminus\!\{\ell\}}\! 
(d_{ij}(t) - \|x_i - x_j\|) \, \dot{d}_{ij}(t) \notag \\
& +
w^{\mathrm{\ell}} (d_i^\ell(t) - \|x_i - x_\ell(t)\|)
\left(
\frac{(x_i - x_\ell(t))^\top}{\|x_i - x_\ell(t)\|}\dot{x}_\ell(t) + \dot{d}_i^\ell(t)
\right),
\label{eqn:pVPEpt_i_exp1}
\end{align}
$\forall i \in \mathcal{N}\!\setminus\!\! \{\ell\}$. Substituting \eqref{eqn:VPE_i_exp1}, \eqref{eqn:gradVPE_i_exp1}, and \eqref{eqn:pVPEpt_i_exp1} into the disentangled control framework \eqref{eqn:secondOptimization}, the simulation results are shown in Fig.~\ref{fig:exp1}, Fig.~\ref{fig:exp1_curve}, and Fig.~\ref{fig:exp1_curve_sub}. The results show that the total PE-TVCF of the follower robots $V^{\mathrm{PE}}(x,t) = \sum_{i \in \mathcal{N}\setminus \!\{\ell\}} V_i^{\mathrm{PE}}(x_i, \{x_j\}_{j \in \mathcal{N}_i^{\mathrm{PE}}}, t)$ decreases to $0$ at around time step $300$ and remains $0$ thereafter (within numerical precision), indicating that the twelve follower robots closely track the desired regular icosahedral formation with time-varying size and contain the leader robot near the center of the icosahedron. Note that the externally commanded leader trajectory $x_{\ell}(t)$ is prescribed independently of the follower states and is treated as an explicit time-dependent reference. It can be seen from \eqref{eqn:VPE_i_exp1}, \eqref{eqn:gradVPE_i_exp1}, and \eqref{eqn:pVPEpt_i_exp1} that the problem of time-varying leader-follower formation control can be solved using \eqref{eqn:secondOptimization} in a decentralized manner because each follower robot computes its control input without access to the simultaneously computed control input of any other follower robot.

Notably, we observe from Fig.~\ref{fig:exp1_curve_sub} (together with Fig.~\ref{fig:exp1_curve}) that an interesting phenomenon emerges spontaneously (i.e., without the imposition of certain predefined rules aimed at producing it): To ensure the decrease of the total PE-TVCF $V^{\mathrm{PE}} = \sum_{i \in \mathcal{N}\setminus \!\{\ell\}} V_i^{\mathrm{PE}}$, some follower robots occasionally ``sacrifice'' by allowing their individual PE-TVCFs $V_i^{\mathrm{PE}}$ to increase during some period of time (e.g., robot~$9$ from around time steps $15$ to $35$ and $100$ to $150$), thereby contributing to the ``greater good'' of the whole system.

\subsection{Decentralized Coverage Control for Time-Varying Density Functions} \label{Section:subsec_exp2}
\begin{figure*}[t]
\centering
\subfigure[Time Step $1$]{
\includegraphics[width=0.23\textwidth]{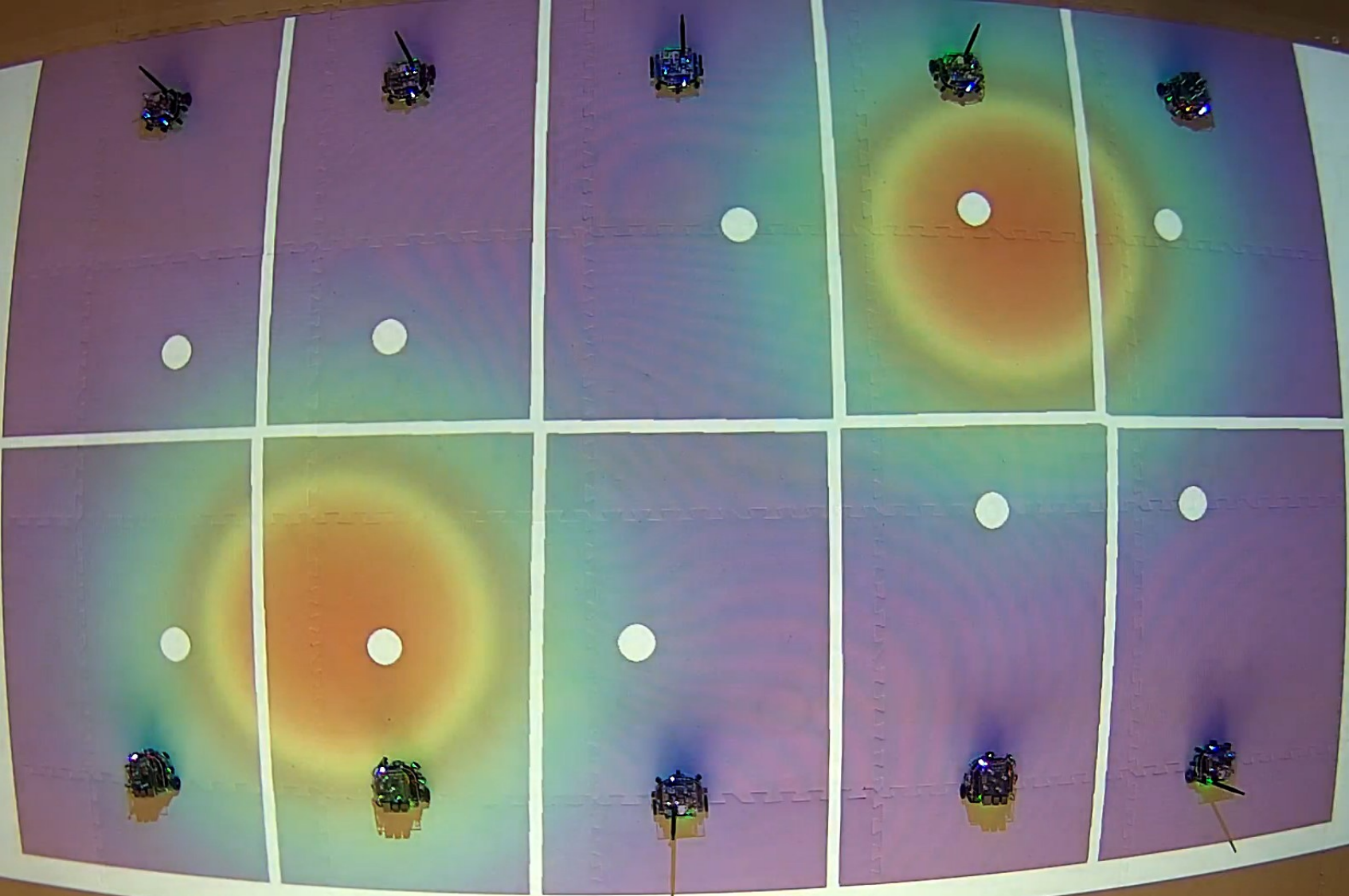}
\label{fig:exp2_1}
}
\subfigure[Time Step $300$]{
\includegraphics[width=0.23\textwidth]{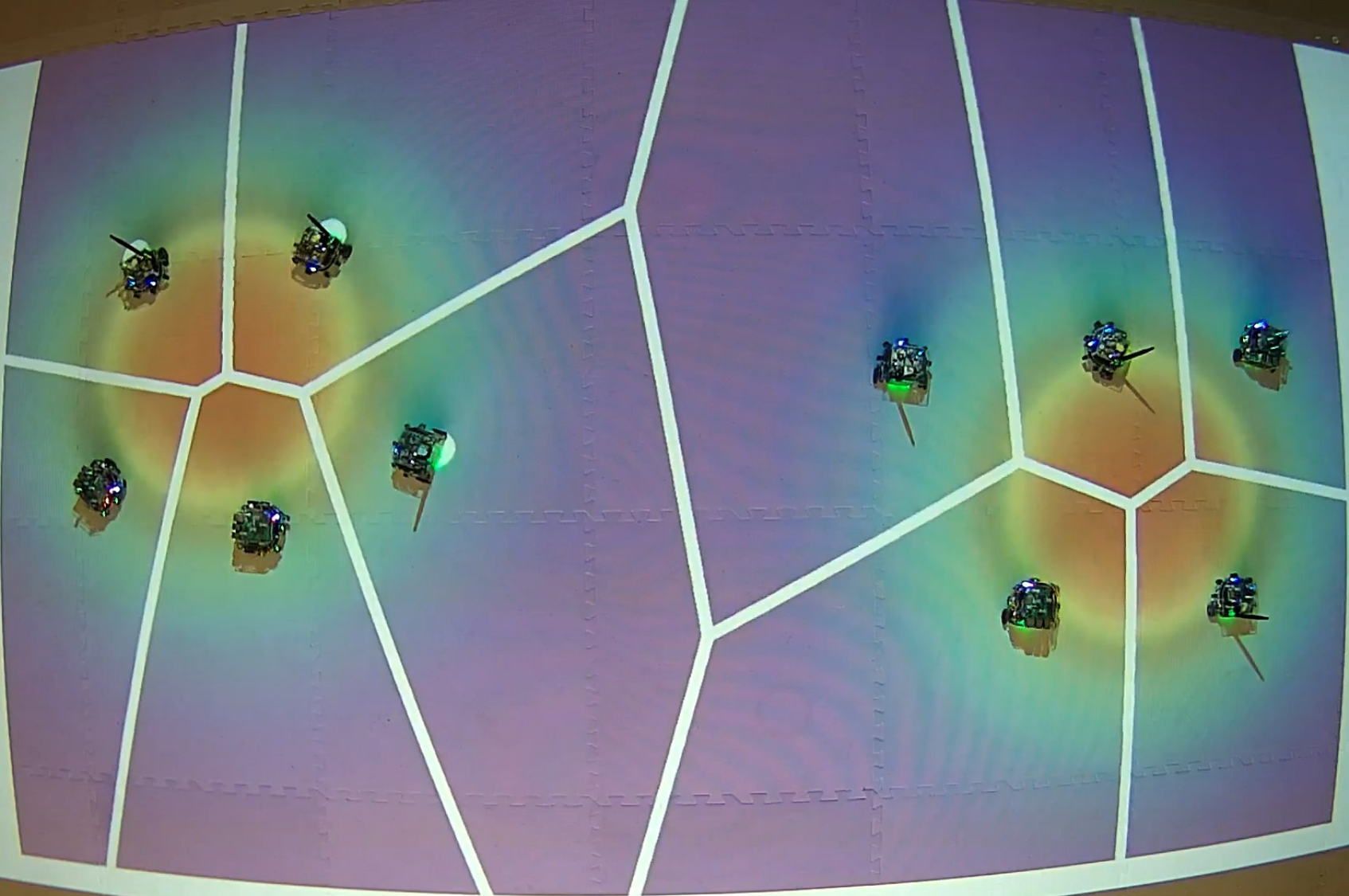}
}
\subfigure[Time Step $800$]{
\includegraphics[width=0.23\textwidth]{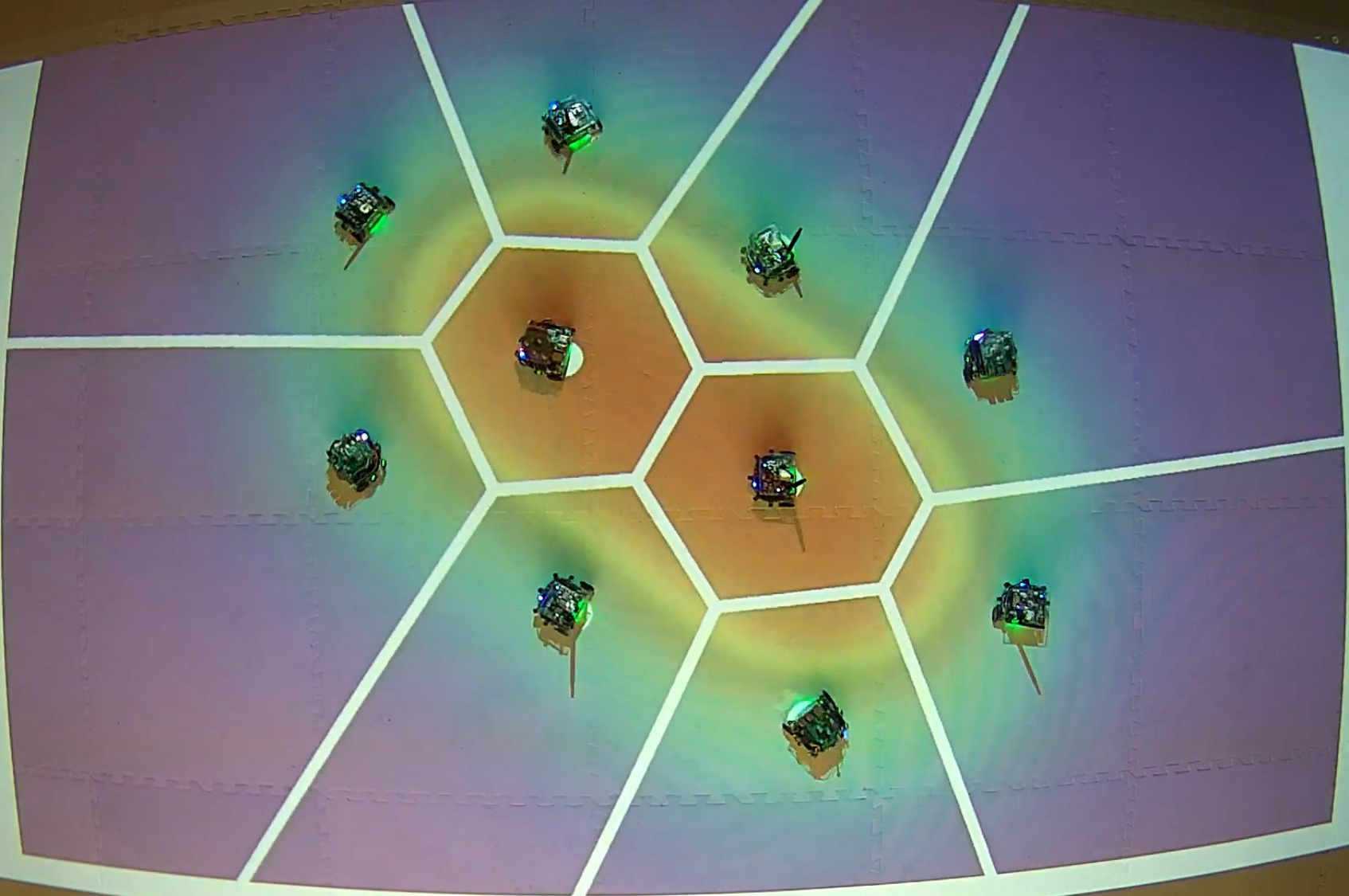}
}
\subfigure[Time Step $1400$]{
\includegraphics[width=0.23\textwidth]{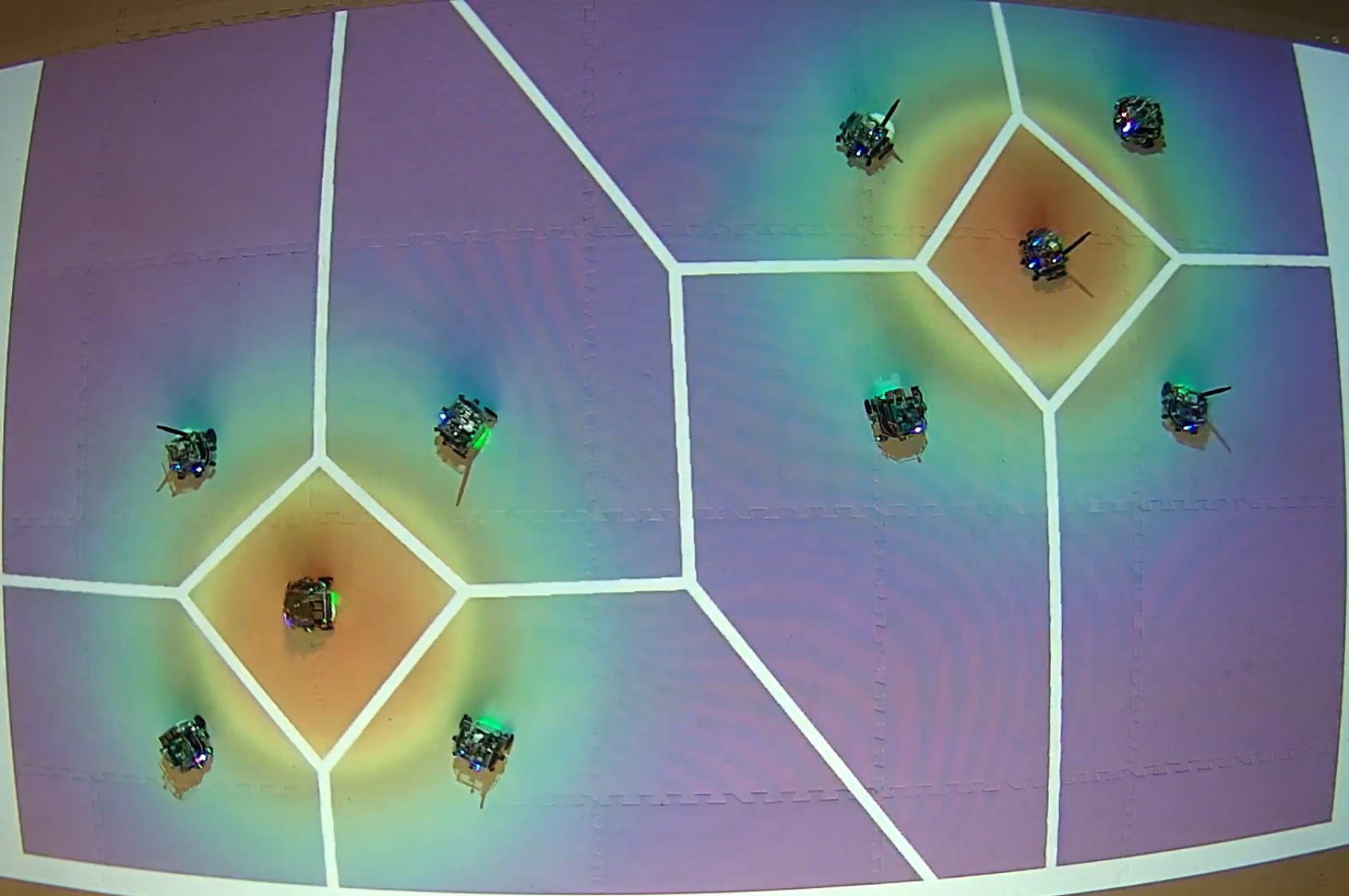}
}
\caption{Snapshots of the physical experiment of decentralized coverage control for time-varying density distribution in Section~\ref{Section:subsec_exp2}. Each robot is in charge of a dominant subdomain, i.e., a Voronoi cell, with a set of white lines as its boundaries and a white dot as its center of mass with respect to the time-varying density function. A rainbow colormap is employed to visualize this density distribution, with warmer colors representing higher density values and cooler colors representing lower values. The full video of this experiment is available online at \url{https://youtu.be/eyFkZO9CuQ0}.
}
\label{fig:exp2}
\end{figure*}
\begin{figure}[t]
\centering
\includegraphics[scale=0.3]{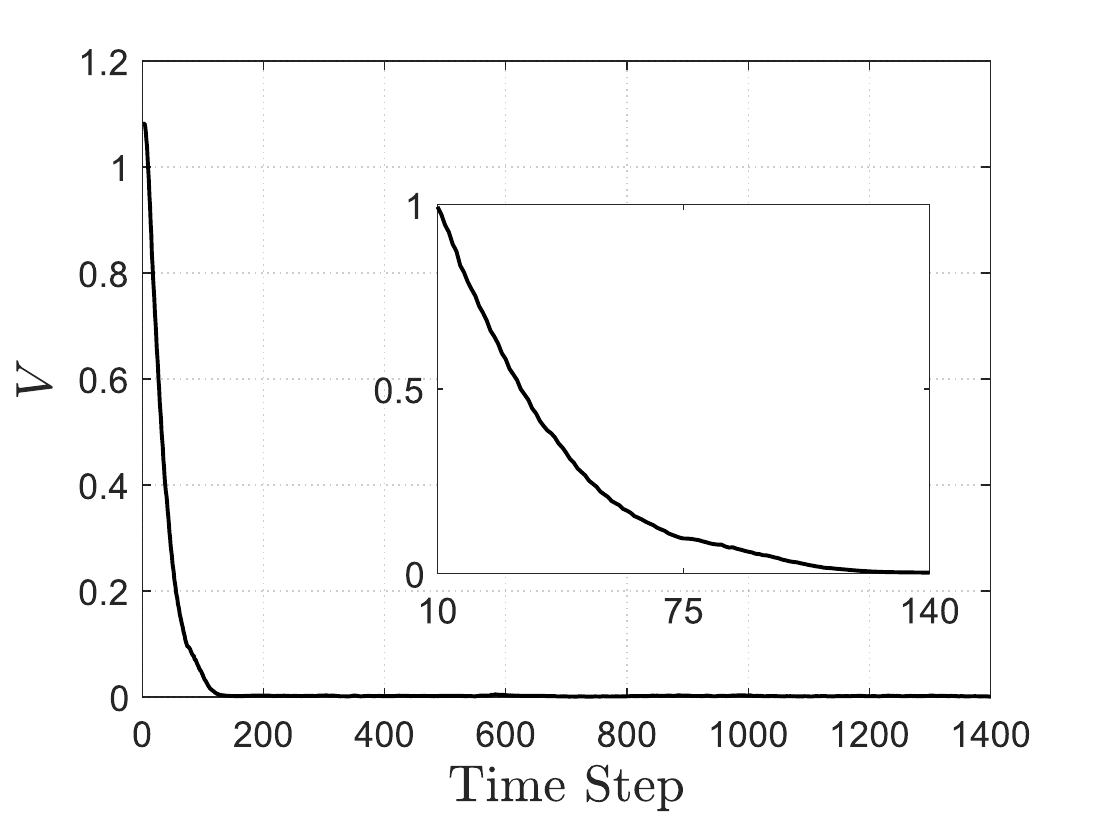}
\caption{The evolution of the total PE-TVCF defined in Section~\ref{Section:subsec_exp2} with respect to time steps.}
\label{fig:exp2_curve}
\end{figure}

In this subsection, we employ disentangled control to address the long-standing open problem, decentralized coverage control for time-varying density functions (without Neumann series approximations), which can be applied to environmental monitoring, target tracking, search-and-rescue, etc.

Since asymptotic convergence to a TV-CVT is the coverage objective, the PE-TVCF for each robot can be formulated as
\begin{align}
V_i^{\mathrm{PE}}(x_i, \{x_j\}_{j \in \mathcal{N}_i^{\mathrm{PE}}}, t)
= \frac{1}{2}\|x_i - G_i(x_i, \{x_j\}_{j \in \mathcal{N}_i^{\mathrm{PE}}}, t)\|^2,
\label{eqn:VPE_i_exp2}
\end{align}
$\forall i \in \mathcal{N}$, where the cost graph $\mathcal{G}^{\mathrm{PE}}$ is isomorphic to the Delaunay graph. Precisely, 
\begin{equation}
G_i(x_i, \{x_j\}_{j \in \mathcal{N}_i^{\mathrm{PE}}}, t) = \frac{\int_{\mathcal{V}_i(x_i,\{x_j\}_{j \in \mathcal{N}_i^{\mathrm{PE}}})}\! q \phi(q,t)
\,\mathrm{d}q}{M_i(x_i, \{x_j\}_{j \in \mathcal{N}_i^{\mathrm{PE}}},t)}, \notag
\end{equation}
in which 
\begin{equation}
M_i(x_i, \{x_j\}_{j \in \mathcal{N}_i^{\mathrm{PE}}},t) = \int_{\mathcal{V}_i(x_i, \{x_j\}_{j \in \mathcal{N}_i^{\mathrm{PE}}})}\! \phi(q,t)
\,\mathrm{d}q. \notag
\end{equation}

According to \eqref{eqn:VPE_i_exp2}, we can obtain
\begin{equation}
{\partial_t} V_i^{\mathrm{PE}}
= 
\frac{1}{M_i}(G_i - x_i)^\top \!
\int_{\mathcal{V}_i}\! (q - G_i) \, 
{\partial_t} \phi
\,\mathrm{d}q,
\label{eqn:pVPEpt_i_exp2}
\end{equation}
and
\begin{equation}
\label{eqn:gradVPE_i_exp2}
{{\nabla_{\!x_i}}\!V} 
= 
(x_i - G_i)^\top \!
\left(
\!I - \frac{\partial G_i}{\partial x_i}\!
\right)
+ \!\!
\sum_{j \in \mathcal{N}_i^{\mathrm{PE}}}\!\! 
\left(
G_j - x_j
\right)^\top \frac{\partial G_j}{\partial x_i},
\end{equation}
in which the Jacobians are
\begin{equation}
\frac{\partial G_i}{\partial x_i}
= \!\!\!
\sum_{j \in \mathcal{N}_i^{\mathrm{PE}}} \!\!
\frac{
\int_{\partial \mathcal{V}_{ij}} (q - G_i) (q - x_i)^{\top} \phi
\,\mathrm{d}q
}{
M_i \, \|x_j - x_i\|
}, \notag
\end{equation}
\begin{equation}
\frac{\partial G_j}{\partial x_i}
=
\frac{
\int_{\partial \mathcal{V}_{ij}} (G_j - q)  (q - x_i)^{\top}  \phi \, \mathrm{d}q
}{
M_j \, \|x_i - x_j\|
}, \notag
\end{equation}
where $\partial \mathcal{V}_{ij}$ denotes the shared boundary between the Voronoi cells $\mathcal{V}_i$ and $\mathcal{V}_j$, $\mathrm{d}q$ denotes the corresponding measure, and the arguments in \eqref{eqn:pVPEpt_i_exp2} and \eqref{eqn:gradVPE_i_exp2} are omitted for notational simplicity.

It can be seen from \eqref{eqn:VPE_i_exp2}, \eqref{eqn:pVPEpt_i_exp2}, and \eqref{eqn:gradVPE_i_exp2} that the problem of time-varying coverage control can be solved using \eqref{eqn:secondOptimization} in a decentralized manner since each robot computes its control input without access to the simultaneously computed control input of any other robot. Moreover, no simplifying approximation, such as Neumann series truncation, is involved, and asymptotic convergence to a TV-CVT,
\begin{equation}
x_i^{*}(t) = G_i(x_i^{*}(t),\{x_j^{*}(t)\}_{j \in \mathcal{N}_i^{\mathrm{PE}}}, t), \quad \forall i \in \mathcal{N},
\label{eqn:TVCVT}
\end{equation}
follows from Theorem~\ref{theorem:disentanglePE}. Note that although the underlying Delaunay graph can be piecewise-constant, $V = \sum_{i \in \mathcal{N}}V_i^{\mathrm{PE}}$ is continuous across graph switches and \eqref{eqn:secondOptimization} ensures that $V$ decreases on each interval where the Delaunay graph remains constant.

In the physical experiment, we use a group of $N=10$ differential-drive mobile robots. The initial configuration of the robots is shown in Fig.~\ref{fig:exp2_1}, where $\mathcal{V}_i$ and $G_i$, for any $i \in \mathcal{N}$, are represented by a set of white lines and a white dot, respectively, and the density $\phi(q,t)$ is visualized by a rainbow colormap with warmer colors corresponding to higher density values and cooler colors to lower values. Substituting \eqref{eqn:VPE_i_exp2}, \eqref{eqn:pVPEpt_i_exp2}, and \eqref{eqn:gradVPE_i_exp2} into \eqref{eqn:secondOptimization}, the experimental results are shown in Fig.~\ref{fig:exp2} and Fig.~\ref{fig:exp2_curve}. As shown in Fig.~\ref{fig:exp2_curve}, the total PE-TVCF $V(x,t) = \sum_{i \in \mathcal{N}} V_i^{\mathrm{PE}}(x_i, \{x_j\}_{j \in \mathcal{N}_i^{\mathrm{PE}}}, t)$ decreases to $0$ at around time step $140$ and remains $0$ thereafter (within numerical precision), indicating that the team of robots closely tracks a TV-CVT \eqref{eqn:TVCVT}, as can also be seen in Fig.~\ref{fig:exp2}.

\subsection{Safe Formation Navigation in a Dense Environment} \label{Section:subsec_exp3}
\begin{figure*}[t]
\centering
\subfigure[Time Step $1$]{
\includegraphics[width=0.23\textwidth]{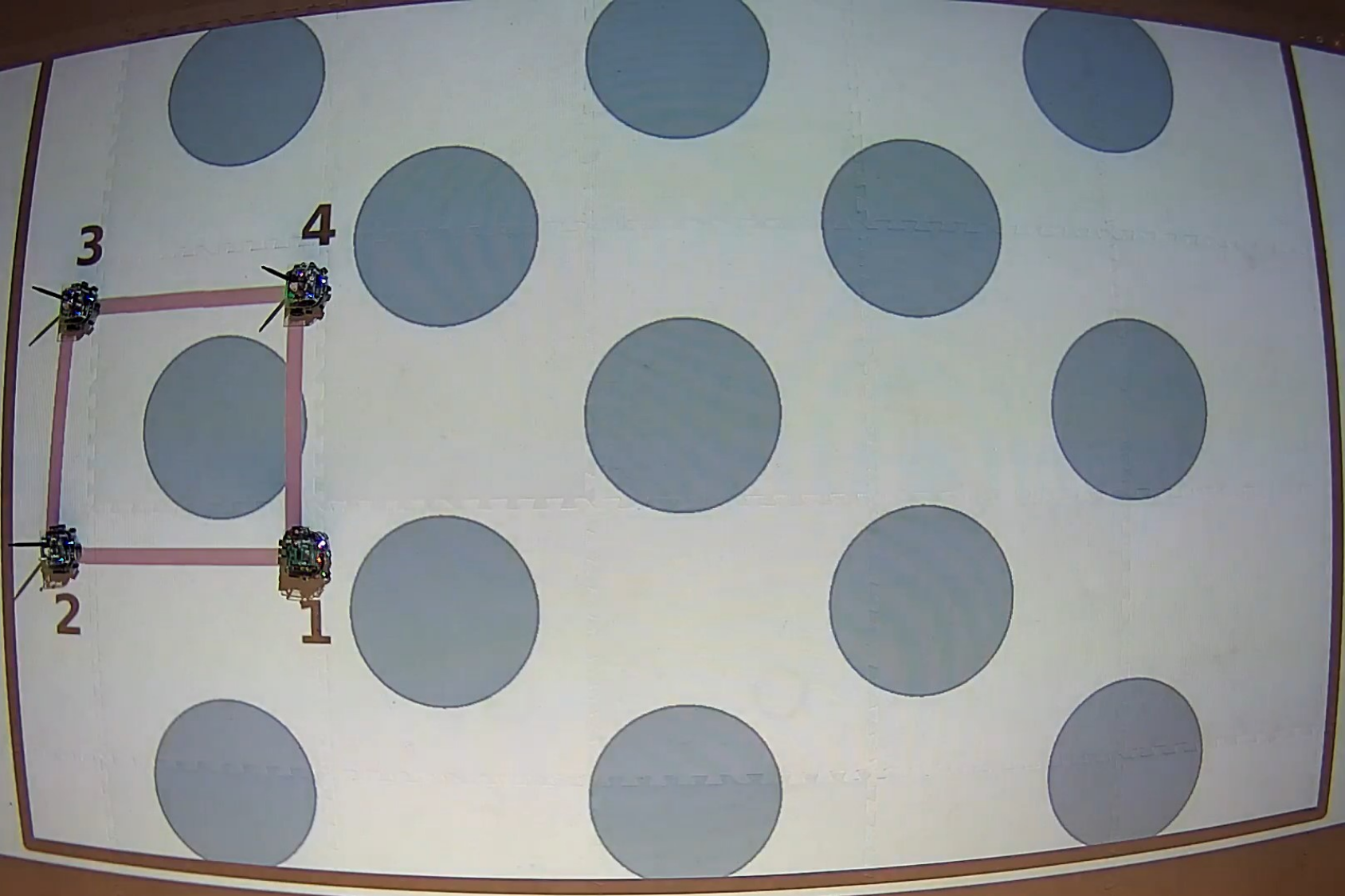}
\label{fig:exp3_1}
}
\subfigure[Time Step $500$]{
\includegraphics[width=0.23\textwidth]{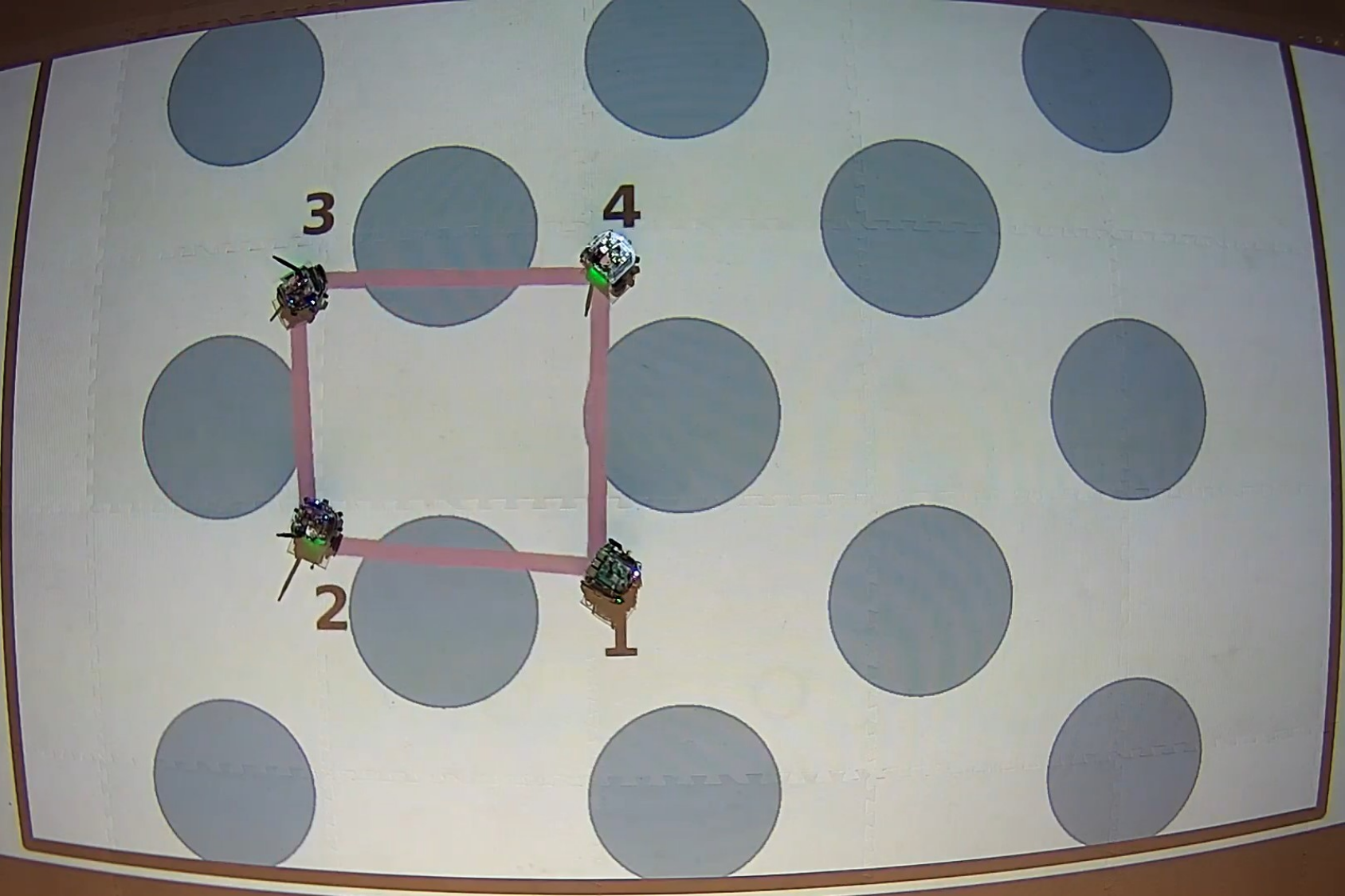}
}
\subfigure[Time Step $1000$]{
\includegraphics[width=0.23\textwidth]{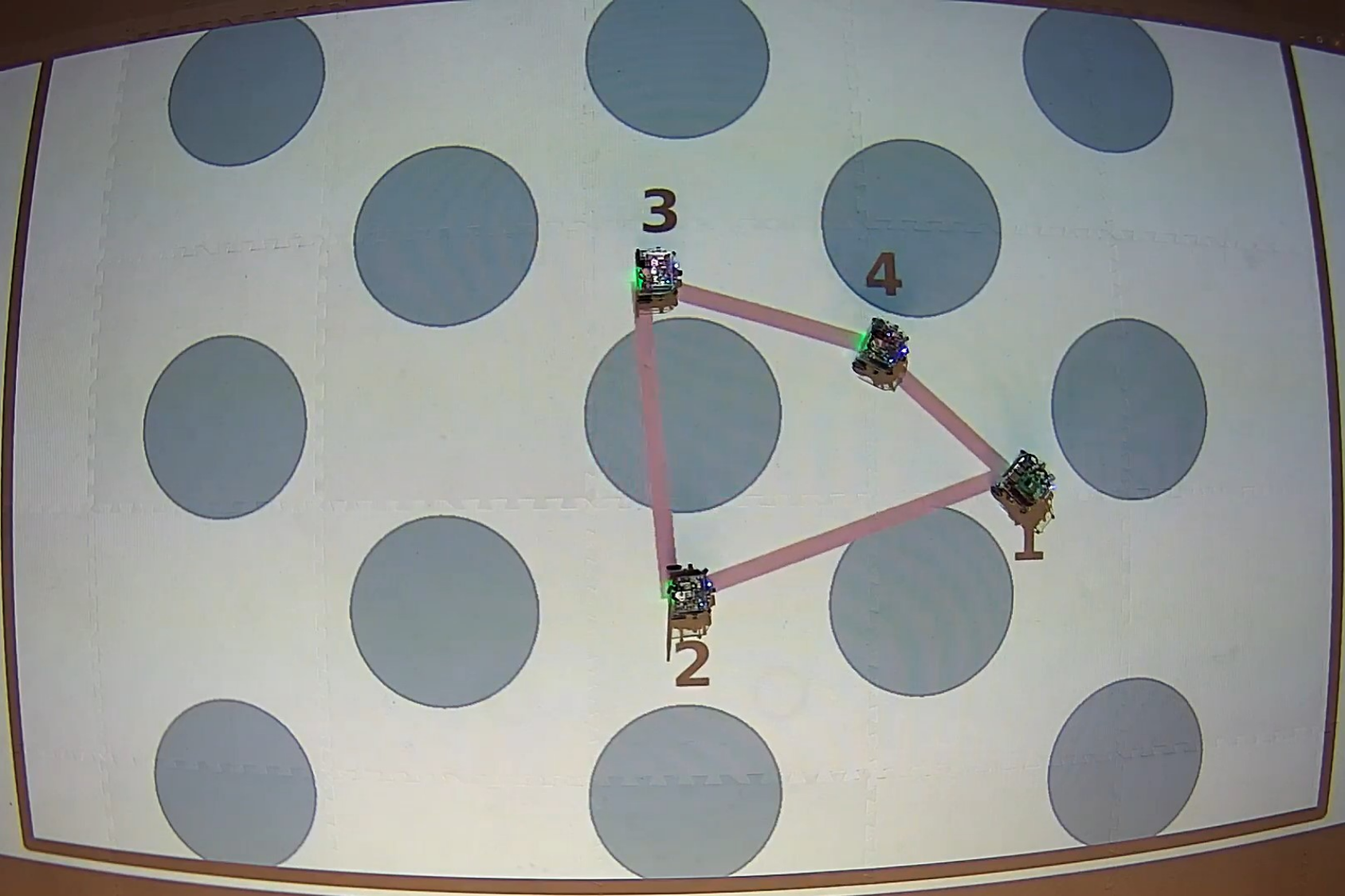}
}
\subfigure[Time Step $2000$]{
\includegraphics[width=0.23\textwidth]{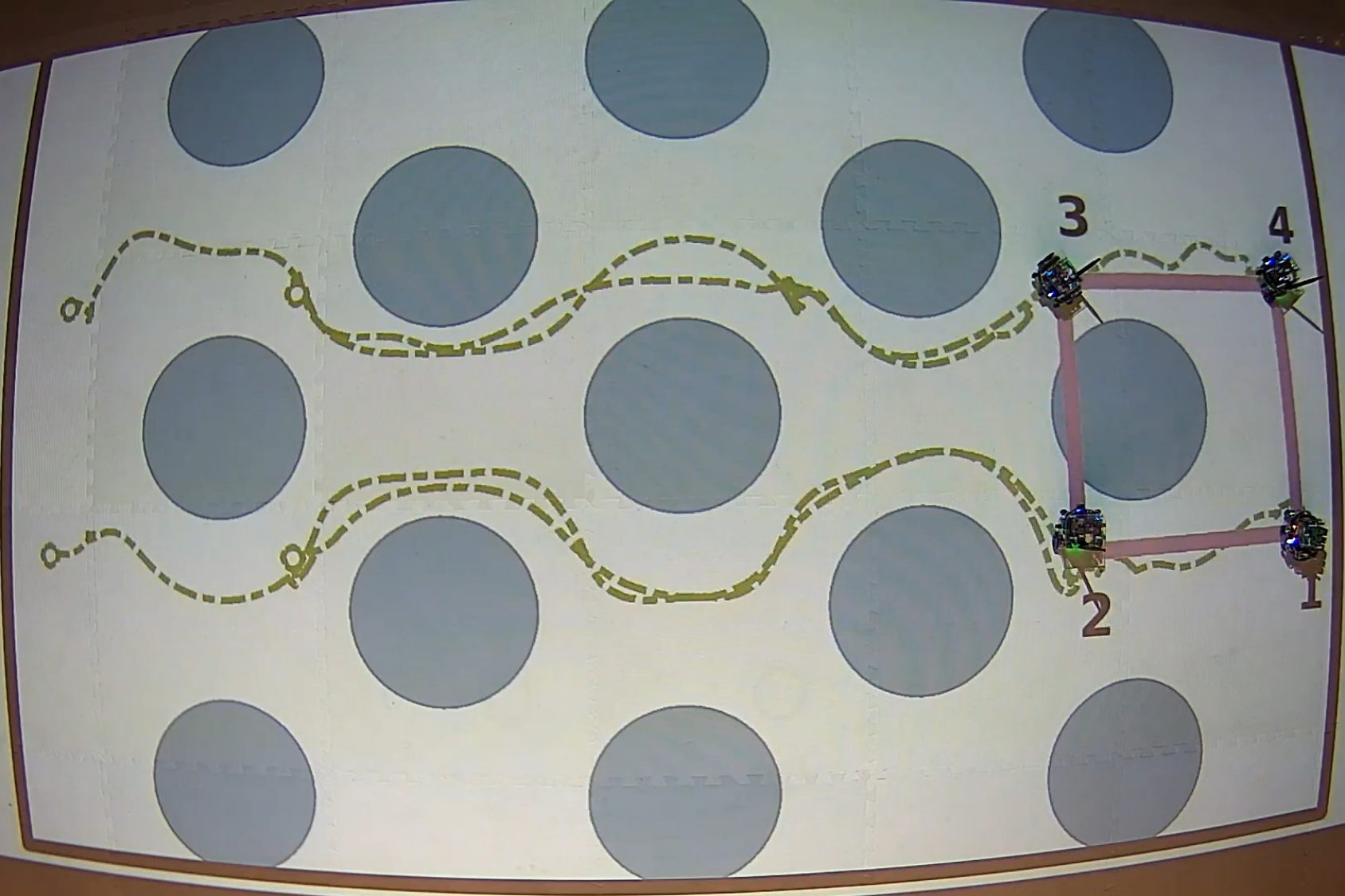}
\label{fig:exp3_2000}
}
\caption{Snapshots of the physical experiment for safe formation navigation in a dense environment in Section~\ref{Section:subsec_exp3}. The magenta lines represent the undirected edges of the corresponding cost graph, the blue disks represent the obstacles, and the green dashed curves represent the trajectories of the robots traveled up to time step $2000$. The full video of this experiment is available online at \url{https://youtu.be/CMFtsuBCy0k}.
}
\label{fig:exp3}
\end{figure*}
\begin{figure}[t]
\centering
\includegraphics[scale=0.3]{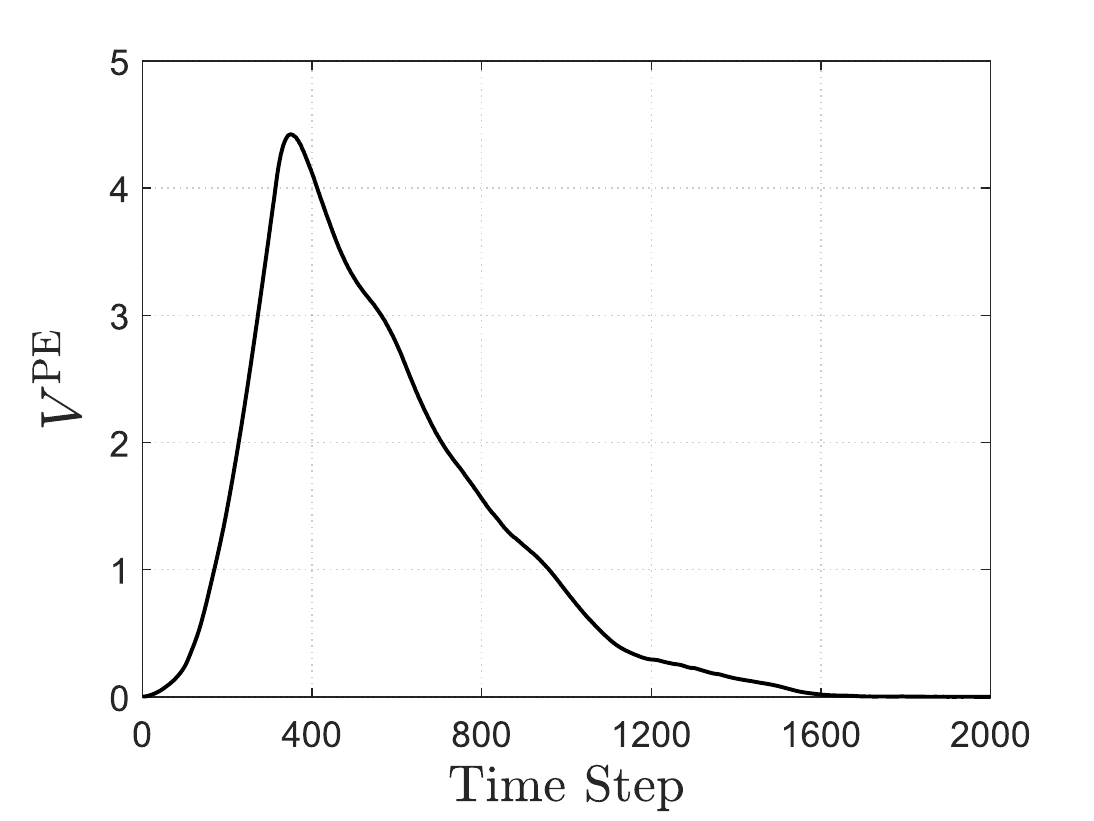}
\caption{The evolution of the total PE-TVCF of the follower robots defined in Section~\ref{Section:subsec_exp3} with respect to time steps.}
\label{fig:exp3_curve}
\end{figure}

As mentioned in Section~\ref{Section:subsecdisentangledcontrol}, an advantage of disentangled control is that it naturally accommodates multi-objective robotic tasks through incorporating the corresponding constraints. To demonstrate this benefit, in this subsection, we consider the objective of collision avoidance in addition to navigation in formation.

Specifically, we consider $N=5$ agents with $4$ follower robots, whose states are denoted by $x_i \in \mathbb{R}^2$, $\forall i \in \mathcal{N} \!\setminus \!\!\{\ell\}$, and one virtual leader, whose state is prescribed by $x_{\ell}(t) \in \mathbb{R}^2$ with $\ell = 5$. The goal is that the follower robots form a square and follow the virtual leader as much as possible while avoiding collisions in an environment with densely distributed circular obstacles. To this end, the PE-TVCF for each follower robot is defined as
\begin{align}
V_i^{\mathrm{PE}}(x_i, \{x_j\}_{j \in \mathcal{N}_i^{\mathrm{PE}}},t)
&= 
\frac{w^{\mathrm{f}}}{2} \!\!\!
\sum_{j \in \mathcal{N}_i^{\mathrm{PE}}\setminus \!\{\ell\}}
\!\!\!
\left(
\| x_i - x_j\| - d_{ij}
\right)^2 \notag \\
&\quad+
\frac{w^{\mathrm{\ell}}}{2}
\left(
\| x_i - x_{\ell}(t)\| - d_{i}^\ell
\right)^2,
\label{eqn:VPE_i_exp3}
\end{align}
which is a special case of \eqref{eqn:VPE_i_exp1}, $\forall i \in \mathcal{N}\setminus\! \{\ell\}$, where the cost graph $\mathcal{G}^{\mathrm{PE}}$ consists of the vertices representing the follower robots and edges of a square, augmented with an additional vertex representing the virtual leader and directed edges from the leader to each follower. 

Since each robot has to avoid colliding with obstacles and safety is typically more important than task accomplishment, a slack variable should be added into the constraint induced by \eqref{eqn:VPE_i_exp3}. In addition, due to the slack variable introduced into the constraint induced by \eqref{eqn:VPE_i_exp3}, collision avoidance between robots should also be taken into account. Hence, we set two additional types of constraints for each robot to ensure that it neither collides with obstacles nor with any other robots. 

The control barrier function (CBF) (see \cite{ames2016control} for detailed discussion on CBF) for robot $i$ to avoid colliding with obstacle $k \in \mathcal{K} \coloneqq \{1,2,\dots,K\}$ with $K \in \mathbb{Z}^+$ is given by $h_i^k(x_i) = \|x_i - c_k\|^2 - r_k^2$, where $c_k \in \mathbb{R}^2$ and $r_k >0$ are the center and radius of obstacle $k$ respectively. In addition, the CBF for robot $i$ to avoid colliding with robot $j$ is given by \cite{wang2017safety} 
\begin{equation}
\label{eqn:CAcbf}
h_{ij}^{\mathrm{SE}}(x_i,x_j) = \|x_i - x_j\|^2 - D_{ij}^2,
\end{equation}
where $D_{ij}>0$ is the safe distance between robots $i$ and $j$.
Note that \eqref{eqn:CAcbf} is a shared-entangled CBF (SE-CBF) that involves the states of both robots $i$ and $j$, as defined below.
\begin{definition}
Given a group of agents with index set $\mathcal{N}$, for any $\mathcal{N}^{\mathrm{SE}} \in \mathcal{S}^{\mathrm{SE}} \subseteq 2^{\mathcal{N}} \setminus \!\{\varnothing\}$, a CBF $h_{\mathcal{N}^{\mathrm{SE}}}: \mathbb{R}^{\sum_{i \in \mathcal{N}^{\mathrm{SE}}}n_i}
\to \mathbb{R}$
is called an SE-CBF if it depends only on $\{x_i\}_{i \in \mathcal{N}^{\mathrm{SE}}}$.
\end{definition}

Thus, disentangled control with safety requirements considered in this subsection takes the form of
\begin{align}
    & \argmin_{{u}_i \in \mathcal{U}_i, \,\delta_i \in \mathbb{R}} \!\!\!\!\!\!
    & & \|u_i\|^2 + \kappa_i \delta_i^2 \notag \\
    & \,\,\,\,\,\,\,\,\,\,\,\,\textnormal{s.t.}
    & & 
    {{\nabla_{\!x_i}}\!V} 
    \,u_i \leq - \alpha\!\left(V_i^{\mathrm{PE}}\right) - {\partial_t} V_i^{\mathrm{PE}} + \delta_i \notag \\
    & & & 
    (x_i - x_j)^\top u_i \geq -w_i^{\mathcal{N}^{\mathrm{SE}}} \gamma (h_{ij}^{\mathrm{SE}}), \;\; \forall j \in \mathcal{N} \!\setminus\!\! \{i,\ell\}, \notag \\
    & & & 
    (x_i - c_k)^\top u_i \geq -\gamma (h_i^k), 
    \;\; \forall k \in \mathcal{K},
    \notag
\end{align}
$\forall i \in \mathcal{N}\setminus\! \{\ell\}$, where $\kappa_i >0$ is a penalty coefficient, $\gamma: \mathbb{R} \to \mathbb{R}$ is an extended class $\mathcal{K}_{\infty}$ function, and an additional constraint with a slack variable is added when $i=2$ to regulate the orientation of the formation.

The initial configuration of four differential-drive mobile robots is shown in Fig.~\ref{fig:exp3_1}. The experimental results are shown in Fig.~\ref{fig:exp3} and Fig.~\ref{fig:exp3_curve}, from which one can observe that at the beginning and from around time step $1600$ to $2000$, the four robots are in the desired square formation, i.e., the corresponding total PE-TVCF of the follower robots $V^{\mathrm{PE}}= \sum_{i \in \mathcal{N}\setminus \!\{\ell\}} V_i^{\mathrm{PE}}$ is zero. However, when the robots are navigating through the densely distributed obstacles, due to the priority of safety (i.e., collision avoidance), they cannot maintain a perfect formation, which is why the total PE-TVCF $V^{\mathrm{PE}}$ is nonzero during this period of time. In addition, the trajectories of the robots, represented by the green dashed curves shown in Fig.~\ref{fig:exp3_2000}, indicate that the safety is maintained throughout the experiment.

\section{Conclusion} \label{Section:Conclusion}
In this paper, we develop a general multi-agent control framework, named {disentangled control}, which provides a systematic approach to decentralized control synthesis without inducing entangled dynamics among the agents. For example, it can be applied to address the long-standing open problem, decentralized coverage control for time-varying density functions without Neumann series approximations. Moreover, the framework is suitable for multi-objective robotics and real-time implementation, as demonstrated in the physical experiments. 

In addition to the applications demonstrated in this paper, disentangled control can be applied to a wide range of problems in a decentralized manner. To name a few, aerial-ground robotic collaboration where the heterogeneous agents have states of different dimensions, i.e., $n_i \neq n_j$, and decentralized coverage control for multiple different time-varying densities, i.e., $\phi_i(q,t) \neq \phi_j(q,t)$. Extending even beyond robotics, the framework holds significant promise for decentralized decision-making and action generation in multi-agent systems. We leave to the community further exploration of the ideas of \eqref{eqn:firstOptimization} and \eqref{eqn:secondOptimization} in broader fields and more diverse applications. Another avenue for future work is to consider systems with higher-order dynamics, more general input constraints, disturbances, and stochastic noise.

\appendices
\section{Limitations of Related Works} \label{Section:appendix}
In this appendix, we detail several theoretical limitations in the closely related works \cite{notomista2019constraint, santos2019decentralized} while adopting the notation convention of \cite{notomista2019constraint,santos2019decentralized} for clarity.

First, we revisit the constraint-driven multi-robot control framework proposed in \cite{notomista2019constraint}.
\begin{proposition}
\label{prop:constraintdriven} 
Given a group of $N$ robots with single-integrator dynamics $\dot{x}_i = u_i$, consider the cost function
\begin{equation}
\label{eqn:paircost}
J(x) = \sum_{i=1}^N \sum_{j \in \mathcal{N}_i} J_{ij} (\|x_i - x_j\|),
\end{equation}
where $\mathcal{N}_i$ denotes the set of indices of the neighbors of robot~$i$, and $J_{ij}: \mathbb{R}_{\geq 0} \to \mathbb{R}_{\geq 0}$, $J_{ij} (\|x_i - x_j\|) = J_{ji} (\|x_j - x_i\|)$ is a symmetric, pairwise
function between robots $i$ and $j$. If each robot executes the controller $u_i$ such that
\begin{equation}
\label{eqn:surrogateconstraint}
\frac{\partial J_i}{\partial x_i}
u_i \leq -\alpha(J_i),
\end{equation}
$\forall t \geq 0$, $\forall i \in \mathcal{N}$, where $J_i(x) = \sum_{j \in \mathcal{N}_i}J_{ij} (\|x_i - x_j\|)$ and $\alpha$ is a subadditive extended class $\mathcal{K}$ function, then $J(x)$ given by \eqref{eqn:paircost} asymptotically converges to $0$.
\end{proposition}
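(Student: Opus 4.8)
The plan is to exploit the pairwise symmetry $J_{ij}=J_{ji}$ to show that the collection of local constraints \eqref{eqn:surrogateconstraint} forces the global cost \eqref{eqn:paircost} to obey a scalar Lyapunov-type differential inequality along the closed-loop trajectory, and then to conclude via the comparison lemma exactly as in the proof of Theorem~\ref{Thm:NSTVCLF}. Observe first that $J(x)=\sum_{i\in\mathcal{N}}J_i(x)$ with $J_i(x)=\sum_{j\in\mathcal{N}_i}J_{ij}(\|x_i-x_j\|)$, so each agent's surrogate $J_i$ is precisely the piece of $J$ assembled from the edges incident to $i$.

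First I would differentiate $J$ along the trajectory and collect, for each fixed agent $k$, the coefficient that multiplies $\dot x_k$ in $\frac{d}{dt}J=\sum_{i}\frac{d}{dt}J_i$. The velocity $\dot x_k$ enters $\frac{d}{dt}J_k$ through $\nabla_{x_k}J_k$, and it enters $\frac{d}{dt}J_i$ for each neighbor $i\in\mathcal{N}_k$ through $\nabla_{x_k}J_{ik}(\|x_i-x_k\|)$. Since $J_{ik}=J_{ki}$ and $\|x_i-x_k\|=\|x_k-x_i\|$, one has $\nabla_{x_k}J_{ik}(\|x_i-x_k\|)=\nabla_{x_k}J_{ki}(\|x_k-x_i\|)$, hence $\sum_{i\in\mathcal{N}_k}\nabla_{x_k}J_i=\nabla_{x_k}J_k$, and therefore $\frac{d}{dt}J(x(t))=2\sum_{k\in\mathcal{N}}\nabla_{x_k}J_k(x(t))\,\dot x_k(t)$. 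Applying the hypothesis \eqref{eqn:surrogateconstraint} term by term then gives $\frac{d}{dt}J(x(t))\le-2\sum_{i\in\mathcal{N}}\alpha\!\big(J_i(x(t))\big)$.

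Next I would invoke the subadditivity of $\alpha$: iterating $\alpha(a+b)\le\alpha(a)+\alpha(b)$ yields $\sum_{i\in\mathcal{N}}\alpha(J_i)\ge\alpha\!\big(\sum_{i\in\mathcal{N}}J_i\big)=\alpha(J)$, so $\frac{d}{dt}J(x(t))\le-\alpha\!\big(J(x(t))\big)$ (the factor $2$ may be kept or discarded since $\alpha\ge0$). Comparing with the reference equation $\dot{\bar V}=-\alpha(\bar V)$, $\bar V(0)=J(x(0))$ through Lemma~\ref{lemma:Comparison}, and using that its solution is a class $\mathcal{K}\mathcal{L}$ function of $\big(J(x(0)),t\big)$ (Lemma~4.4 of \cite{khalil2002control}, exactly as in the proof of Theorem~\ref{Thm:NSTVCLF}), one obtains $0\le J(x(t))\le\bar{\alpha}\!\big(J(x(0)),t\big)\to0$, i.e., $J(x(t))\to0$.

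I expect the bookkeeping in the second step to be the main obstacle: one must verify that the \emph{uncontrolled} cross terms $\nabla_{x_j}J_i\,\dot x_j$ reassemble precisely into the \emph{controlled} quantities $\nabla_{x_k}J_k\,\dot x_k$, and this is exactly where the symmetry $J_{ij}=J_{ji}$ (together with the implicit undirectedness $j\in\mathcal{N}_i\Leftrightarrow i\in\mathcal{N}_j$) is indispensable; subadditivity of $\alpha$ is the other ingredient that cannot be dropped, since it is what lets the separate decay rates $\alpha(J_i)$ dominate the single rate $\alpha(J)$. A minor technical caveat is the nondifferentiability of $x\mapsto J_{ij}(\|x_i-x_j\|)$ at coincident positions $x_i=x_j$; this is handled by restricting to the open region of distinct agent positions (kept invariant in the constraint-driven setting) or, more generally, by replacing gradients with Clarke generalized gradients and invoking Theorem~\ref{Thm:NSTVCLF} rather than its smooth specialization.
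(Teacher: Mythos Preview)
Your argument is correct. The key identity $\frac{d}{dt}J(x(t))=2\sum_{k\in\mathcal N}\nabla_{x_k}J_k(x(t))\,\dot x_k(t)$, obtained by reassembling the cross terms via $J_{ij}=J_{ji}$ and the undirectedness $j\in\mathcal N_i\Leftrightarrow i\in\mathcal N_j$, is exactly the crux, and after that the subadditivity step and the comparison argument go through as you wrote.

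As for the comparison: the paper does not supply an explicit proof of this proposition; it is stated as a refinement of Proposition~8 in \cite{notomista2019constraint} and accompanied only by Remark~\ref{remark:CBF2CLFform} explaining how the CBF formulation there is recast as a CLF formulation (with superadditivity becoming subadditivity). The argument you outline is precisely the one underlying the cited result, transported to the CLF side, and it is also structurally the same as the paper's proof of Theorem~\ref{theorem:disentanglePE}: sum the per-agent inequalities, bound the Lie derivative of the aggregate, invoke subadditivity of $\alpha$, then appeal to the Lyapunov/comparison machinery of Theorem~\ref{Thm:NSTVCLF}. The only substantive extra ingredient in your write-up relative to Theorem~\ref{theorem:disentanglePE} is the symmetry bookkeeping that produces the factor~$2$, which is needed here because the local constraint \eqref{eqn:surrogateconstraint} involves $\nabla_{x_i}J_i$ rather than $\nabla_{x_i}J$; you handle this correctly.
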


Proposition \ref{prop:constraintdriven} is a restatement of Proposition 8 in \cite{notomista2019constraint}. However, the structural assumptions on the cost function \eqref{eqn:paircost} in Proposition \ref{prop:constraintdriven} somewhat restrict the classes of problems that can be covered. For example, Proposition \ref{prop:constraintdriven} assumes that $J_{ij}$ is symmetric and pairwise, but for the problem of leader-follower formation control, even in the time-invariant case, such an assumption does not hold for directed leader-follower interactions where the followers' cost functions depend on the leader state, whereas the leader's cost function does not depend on the follower states.

In addition, Proposition~\ref{prop:constraintdriven} assumes that $J_{ij}$ can be written as an explicit function of the distance between two agents, but such an assumption does not hold for Voronoi-based coverage control, even with time-invariant density functions. Nevertheless, Proposition \ref{prop:constraintdriven} provides convergence guarantees for consensus control (or rendezvous) and formation control in the time-invariant settings.

Next, we clarify several issues in \cite{santos2019decentralized}, which applies \cite{notomista2019constraint} to coverage control for time-varying density functions.

One issue is that when \cite{santos2019decentralized} calculates the total derivatives of some quantities, the influence from the neighbors is missing. For example, Proposition 6 in \cite{santos2019decentralized} calculates the total derivative of $J_i(x,t) = \frac{1}{2}\|x_i - G_i(x,t)\|^2$, where $G_i$ is the center of mass of the Voronoi cell of robot $i$, as
\begin{equation*}
\dot{J}_i(x,t)
=
\frac{\partial J_i}{\partial x_i}
\dot{x}_i
+ 
\frac{\partial J_i}{\partial t},
\end{equation*}
which is inaccurate in the context of coverage control because
\begin{equation*}
\dot{J}_i(x,t)
=
\frac{\partial J_i}{\partial x_i}
\dot{x}_i
+ \!
\sum_{j \in \mathcal{N}_i}
\frac{\partial J_i}{\partial x_j}
\dot{x}_j
+
\frac{\partial J_i}{\partial t},
\end{equation*}
where $\mathcal{N}_i$ denotes the Delaunay neighbors of robot $i$. 

Another issue is that the conclusion of Proposition 4 in \cite{santos2019decentralized} does not necessarily hold (and cannot be proved by following similar to \cite{notomista2019constraint}), not only for coverage control, but also for other multi-robot tasks with even relatively simpler forms of  $J_i(x,t)$. This is because the convergence condition given by Proposition 4 in \cite{santos2019decentralized}, which can be equivalently rewritten as
\begin{equation*}
\label{eqn:santoscond}
\frac{\partial J_i}{\partial x_i}
\dot{x}_i
+ 
\frac{\partial J_i}{\partial t}
\leq 
-\alpha(J_i),
\end{equation*}
where $\alpha$ is a subadditive extended class $\mathcal{K}$ function, does not guarantee the convergence of $J(x,t) = \sum^N_{i=1} J_i(x,t)$.

A further issue is that even in the special case where $J_i(x,t)$ can be written as an explicit function
of the distance between two robots, i.e., 
\begin{equation}
\label{eqn:TVpairwisecost}
J_i(x,t) = 
\sum_{j \in \mathcal{N}_i} J_{ij} (\|x_i - x_j\|,t),
\end{equation}
although such a form \eqref{eqn:TVpairwisecost} is not admissible in coverage control, the convergence of $J(x,t) = \sum^N_{i=1} J_i(x,t)$ still cannot be guaranteed by Proposition 4 in \cite{santos2019decentralized}. 

A special case where the method proposed in \cite{santos2019decentralized} can ensure the convergence of $J$ is when 
\begin{equation*}
\frac{\partial J}{\partial x_i} = \frac{\partial J_i}{\partial x_i},
\end{equation*}
but such a condition is rather restrictive, as it does not hold even in basic multi-robot tasks such as rendezvous.

\bibliographystyle{IEEEtran}
\bibliography{references}{}

@article{ames2016control,
  title={Control barrier function based quadratic programs for safety critical systems},
  author={Ames, Aaron D and Xu, Xiangru and Grizzle, Jessy W and Tabuada, Paulo},
  journal={IEEE Transactions on Automatic Control},
  volume={62},
  number={8},
  pages={3861--3876},
  year={2017},
  publisher={IEEE}
}

@article{lin1996smooth,
  title={A smooth converse Lyapunov theorem for robust stability},
  author={Lin, Yuandan and Sontag, Eduardo D and Wang, Yuan},
  journal={SIAM Journal on Control and Optimization},
  volume={34},
  number={1},
  pages={124--160},
  year={1996},
  publisher={SIAM}
}

@article{olfati2007consensus,
  title={Consensus and cooperation in networked multi-agent systems},
  author={Olfati-Saber, Reza and Fax, J Alex and Murray, Richard M},
  journal={Proceedings of the IEEE},
  volume={95},
  number={1},
  pages={215--233},
  year={2007},
  publisher={IEEE}
}

@article{farina2012distributed,
  title={Distributed predictive control: A non-cooperative algorithm with neighbor-to-neighbor communication for linear systems},
  author={Farina, Marcello and Scattolini, Riccardo},
  journal={Automatica},
  volume={48},
  number={6},
  pages={1088--1096},
  year={2012},
  publisher={Elsevier}
}

@article{dresner2008multiagent,
  title={A multiagent approach to autonomous intersection management},
  author={Dresner, Kurt and Stone, Peter},
  journal={Journal of artificial intelligence research},
  volume={31},
  pages={591--656},
  year={2008}
}

@article{bidram2013secondary,
  title={Secondary control of microgrids based on distributed cooperative control of multi-agent systems},
  author={Bidram, Ali and Davoudi, Ali and Lewis, Frank L and Qu, Zhihua},
  journal={IET Generation, Transmission \& Distribution},
  volume={7},
  number={8},
  pages={822--831},
  year={2013},
  publisher={Wiley Online Library}
}

@book{bullo2009distributed,
  title={Distributed control of robotic networks: a mathematical approach to motion coordination algorithms},
  author={Bullo, Francesco and Cort{\'e}s, Jorge and Martinez, Sonia},
  year={2009},
  publisher={Princeton University Press}
}

@book{ambrosio2005gradient,
  title={Gradient flows: in metric spaces and in the space of probability measures},
  author={Ambrosio, Luigi and Gigli, Nicola and Savar{\'e}, Giuseppe},
  year={2005},
  publisher={Springer}
}

@article{stewart2010cooperative,
  title={Cooperative distributed model predictive control},
  author={Stewart, Brett T and Venkat, Aswin N and Rawlings, James B and Wright, Stephen J and Pannocchia, Gabriele},
  journal={Systems \& Control Letters},
  volume={59},
  number={8},
  pages={460--469},
  year={2010},
  publisher={Elsevier}
}

@article{trodden2017distributed,
  title={Distributed predictive control with minimization of mutual disturbances},
  author={Trodden, Paul A and Maestre, Jose Maria},
  journal={Automatica},
  volume={77},
  pages={31--43},
  year={2017},
  publisher={Elsevier}
}

@article{wang2017safety,
  title={Safety barrier certificates for collisions-free multirobot systems},
  author={Wang, Li and Ames, Aaron D and Egerstedt, Magnus},
  journal={IEEE Transactions on Robotics},
  volume={33},
  number={3},
  pages={661--674},
  year={2017},
  publisher={IEEE}
}

@inproceedings{santos2019decentralized,
  title={Decentralized minimum-energy coverage control for time-varying density functions},
  author={Santos, Maria and Mayya, Siddharth and Notomista, Gennaro and Egerstedt, Magnus},
  booktitle={2019 International Symposium on Multi-Robot and Multi-Agent Systems (MRS)},
  pages={155--161},
  year={2019},
  organization={IEEE}
}

@article{butler2024collaborative,
  title={Collaborative safety-critical formation control with obstacle avoidance},
  author={Butler, Brooks A and Leung, Chi Ho and Par{\'e}, Philip E},
  journal={arXiv preprint arXiv:2410.03885},
  year={2024}
}

@inproceedings{tan1993multi,
  title={Multi-agent reinforcement learning: Independent vs. cooperative agents},
  author={Tan, Ming},
  booktitle={Proceedings of the tenth international conference on machine learning},
  pages={330--337},
  year={1993}
}

@inproceedings{ghosh2018multi,
  title={Multi-agent diverse generative adversarial networks},
  author={Ghosh, Arnab and Kulharia, Viveka and Namboodiri, Vinay P and Torr, Philip HS and Dokania, Puneet K},
  booktitle={Proceedings of the IEEE conference on computer vision and pattern recognition},
  pages={8513--8521},
  year={2018}
}

@article{ji2008containment,
  title={Containment control in mobile networks},
  author={Ji, Meng and Ferrari-Trecate, Giancarlo and Egerstedt, Magnus and Buffa, Annalisa},
  journal={IEEE Transactions on Automatic Control},
  volume={53},
  number={8},
  pages={1972--1975},
  year={2008},
  publisher={IEEE}
}

@article{cortes2017coordinated,
  title={Coordinated control of multi-robot systems: A survey},
  author={Cort{\'e}s, Jorge and Egerstedt, Magnus},
  journal={SICE Journal of Control, Measurement, and System Integration},
  volume={10},
  number={6},
  pages={495--503},
  year={2017},
  publisher={Taylor \& Francis}
}

@article{olfati2002near,
  title={Near-identity diffeomorphisms and exponential $\epsilon$-tracking and $\epsilon$-stabilization of first-order nonholonomic {SE}(2) vehicles},
  author={Olfati-Saber, Reza},
  journal={Proceedings of the American Control Conference},
  year={2002}
}

@inproceedings{lee2023hierarchical,
  title={Hierarchical relaxation of safety-critical controllers: Mitigating contradictory safety conditions with application to quadruped robots},
  author={Lee, Jaemin and Kim, Jeeseop and Ames, Aaron D},
  booktitle={2023 IEEE/RSJ International Conference on Intelligent Robots and Systems (IROS)},
  pages={2384--2391},
  year={2023},
  organization={IEEE}
}

@inproceedings{lin2025hierarchy,
  title={“{H}ierarchy of Needs” for Robots: Control Synthesis for Compositions of Hierarchical, Complex Objectives},
  author={Lin, Ruoyu and Egerstedt, Magnus},
  booktitle={2025 IEEE International Conference on Robotics and Automation (ICRA)},
  pages={7682--7688},
  year={2025},
  organization={IEEE}
}

@article{wilson2020robotarium,
  title={The robotarium: Globally impactful opportunities, challenges, and lessons learned in remote-access, distributed control of multirobot systems},
  author={Wilson, Sean and Glotfelter, Paul and Wang, Li and Mayya, Siddharth and Notomista, Gennaro and Mote, Mark and Egerstedt, Magnus},
  journal={IEEE Control Systems Magazine},
  volume={40},
  number={1},
  pages={26--44},
  year={2020},
  publisher={IEEE}
}

@inproceedings{pimenta2009simultaneous,
  title={Simultaneous coverage and tracking ({SCAT}) of moving targets with robot networks},
  author={Pimenta, Luciano CA and Schwager, Mac and Lindsey, Quentin and Kumar, Vijay and Rus, Daniela and Mesquita, Renato C and Pereira, Guilherme AS},
  booktitle={Algorithmic Foundation of Robotics VIII: Selected Contributions of the Eight International Workshop on the Algorithmic Foundations of Robotics},
  pages={85--99},
  year={2009},
  organization={Springer}
}

@inproceedings{cortes2002coverage,
  title={Coverage control for mobile sensing networks: Variations on a theme},
  author={Cort{\'e}s, Jorge and Martinez, Sonia and Karatas, Timur and Bullo, Francesco},
  booktitle={Mediterranean Conference on Control and Automation},
  pages={9--13},
  year={2002},
  organization={Lisbon, Portugal}
}

@phdthesis{schwager2009gradient,
  title     = {A Gradient Optimization Approach to Adaptive Multi-Robot Control},
  author    = {Schwager, Mac},
  school    = {Massachusetts Institute of Technology},
  year      = {2009},
  address   = {Cambridge, MA}
}

@article{pereira2004decentralized,
  title={Decentralized algorithms for multi-robot manipulation via caging},
  author={Pereira, Guilherme AS and Campos, Mario FM and Kumar, Vijay},
  journal={The International Journal of Robotics Research},
  volume={23},
  number={7-8},
  pages={783--795},
  year={2004},
  publisher={SAGE Publications}
}

@article{cortes2004coverage,
  title={Coverage control for mobile sensing networks},
  author={Cortes, Jorge and Martinez, Sonia and Karatas, Timur and Bullo, Francesco},
  journal={IEEE Transactions on Robotics and Automation},
  volume={20},
  number={2},
  pages={243--255},
  year={2004},
  publisher={IEEE}
}

@article{lloyd1982least,
  title={Least squares quantization in PCM},
  author={Lloyd, Stuart},
  journal={IEEE Transactions on Information Theory},
  volume={28},
  number={2},
  pages={129--137},
  year={1982},
  publisher={IEEE}
}

@inproceedings{notomista2019constraint,
  title={Constraint-driven coordinated control of multi-robot systems},
  author={Notomista, Gennaro and Egerstedt, Magnus},
  booktitle={2019 American Control Conference (ACC)},
  pages={1990--1996},
  year={2019},
  organization={IEEE}
}

@book{khalil2002control,
  author    = {Hassan K. Khalil},
  title     = {Nonlinear Systems},
  edition   = {3rd},
  year      = {2002},
  publisher = {Prentice Hall},
  address   = {Upper Saddle River, NJ}
}

@article{lee2015multirobot,
  title={Multirobot control using time-varying density functions},
  author={Lee, Sung G and Diaz-Mercado, Yancy and Egerstedt, Magnus},
  journal={IEEE Transactions on Robotics},
  volume={31},
  number={2},
  pages={489--493},
  year={2015},
  publisher={IEEE}
}

@article{lin2025heterogeneous,
  title={Heterogeneous Collaborative Pursuit via Coverage Control Driven by \uppercase{F}okker-\uppercase{P}lanck Equations},
  author={Lin, Ruoyu and Kim, Soobum and Egerstedt, Magnus},
  journal={IEEE Transactions on Robotics},
  volume={41},
  pages={3649--3668},
  year={2025},
  publisher={IEEE}
}

@article{xiao2021adaptive,
  title={Adaptive control barrier functions},
  author={Xiao, Wei and Belta, Calin and Cassandras, Christos G},
  journal={IEEE Transactions on Automatic Control},
  volume={67},
  number={5},
  pages={2267--2281},
  year={2022},
  publisher={IEEE}
}

@book{boyd2004convex,
  title={Convex optimization},
  author={Boyd, Stephen P and Vandenberghe, Lieven},
  year={2004},
  publisher={Cambridge University Press}
}

@article{egerstedt2001formation,
  title={Formation constrained multi-agent control},
  author={Egerstedt, Magnus and Hu, Xiaoming},
  journal={IEEE Transactions on Robotics and Automation},
  volume={17},
  number={6},
  pages={947--951},
  year={2001},
  publisher={IEEE}
}

@book{mesbahi2010graph,
  title={Graph theoretic methods in multiagent networks},
  author={Mesbahi, Mehran and Egerstedt, Magnus},
  year={2010},
  publisher={Princeton University Press}
}

\end{document}